\renewcommand\footnotetextcopyrightpermission[1]{} 
        \newtheorem{theorem}{Theorem}
        \newtheorem{lemma}{Lemma}
        \newcommand{\reviewA}{}
        \newcommand{\reviewB}{}
        \newcommand{\reviewC}{}
        \definecolor{reviewa}{HTML}{F00000}
        \definecolor{reviewb}{HTML}{FF00FF}
        \definecolor{reviewc}{HTML}{00F000}
        \definecolor{reviewd}{HTML}{00F0FF}
        \definecolor{reviewe}{HTML}{0000F0}
	\newcommand{\presec}{}
	\newcommand{\postsec}{}
	\newcommand{\presub}{}
	\newcommand{\postsub}{}
	\newcommand{\eg}{\textit{e.g.}}
	\mathchardef\Gamma="0100 \mathchardef\Delta="0101
\mathchardef\Theta="0102 \mathchardef\Lambda="0103
\mathchardef\Xi="0104 \mathchardef\Pi="0105
\mathchardef\Sigma="0106 \mathchardef\Upsilon="0107
\mathchardef\Phi="0108 \mathchardef\Psi="0109
\mathchardef\Omega="010A
\newcommand{\outline}[1]{}
\newcommand{\Comment}[1]{}
    \definecolor{reviewA}{HTML}{0000FF}
    \definecolor{reviewB}{HTML}{FF0000}
    \definecolor{reviewC}{HTML}{FF00FF}
	\definecolor{reviewD}{HTML}{00CC00}
	\newcommand{\bbb}{\noindent\textbf}
	\definecolor{zyd}{RGB}{50,50,200}
	\definecolor{yt}{RGB}{0,166,0}
	\definecolor{zz}{RGB}{0,200,200}
	\definecolor{hwc}{RGB}{200,0,200}
	\definecolor{zyk}{RGB}{200,50,50}
	\definecolor{remove}{RGB}{200,200,200}
	\newcommand{\remove}[1]{}
	\newcommand{\update}{}
	\newcommand{\algo}{GapFilter\xspace}
        \newtheorem{definition}{Definition}
\begin{document}

\title{Detecting Flow Gaps in Data Streams}
\author{Siyuan Dong}
\affiliation{%
  \institution{Peking University}
  \country{China}
}
\email{dongsiyuan@pku.edu.cn}

\author{Yuxuan Tian}
\affiliation{%
  \institution{Peking University}
  \country{China}
}
\email{tianyuxuan@stu.pku.edu.cn}

\author{Wenhan Ma}
\affiliation{
  \institution{Peking University}
  \country{China}
}
\email{mawenhan@stu.pku.edu.cn}

\author{Tong Yang}
\affiliation{%
  \institution{Peking University}
  \country{China}
}
\email{yangtong@pku.edu.cn}

\author{Chenye Zhang}
\affiliation{%
  \institution{Peking University}
  \country{China}
}
\email{z_cy@pku.edu.cn}

\author{Yuhan Wu}
\affiliation{%
  \institution{Peking University}
  \country{China}
}
\email{yuhan.wu@pku.edu.cn}

\author{Kaicheng Yang}
\affiliation{%
  \institution{Peking University}
  \country{China}
}
\email{ykc@pku.edu.cn}

\author{Yaojing Wang}
\affiliation{%
  \institution{Nanjing University}
  \country{China}
}
\email{ghost.wyj@gmail.com}





\begin{abstract}

Data stream monitoring is a crucial task which has a wide range of applications.
%
The majority of existing research in this area can be broadly classified into two types, monitoring \textit{value} sum and monitoring \textit{value} cardinality. In this paper, we define a third type, monitoring \textit{value} variation, which can help us detect \textit{flow gaps} in data streams. To realize this function, we propose \algo, leveraging the idea of Sketch for achieving speed and accuracy. To the best of our knowledge, this is the first work to detect \textit{flow gaps} in data streams. Two key ideas of our work are the \textit{similarity absorption} technique and the $civilian$-$suspect$ mechanism. The \textit{similarity absorption} technique helps in reducing memory usage and enhancing speed, while the $civilian$-$suspect$ mechanism further boosts accuracy by organically integrating broad monitoring of overall flows with meticulous monitoring of suspicious flows.
%
We have developed two versions of \algo. Speed-Oriented \algo (\algo-SO) emphasizes speed while maintaining satisfactory accuracy. Accuracy-Oriented \algo (\algo-AO) prioritizes accuracy while ensuring considerable speed. 
We provide a theoretical proof demonstrating that \algo secures high accuracy with minimal memory usage. Further, extensive experiments were conducted to assess the accuracy and speed of our algorithms. The results reveal that \algo-AO requires, on average, 1/32 of the memory to match the accuracy of the Straw-man solution. \algo-SO operates at a speed 3 times faster than the Straw-man solution.
All associated source code has been open-sourced and is available on GitHub\cite{link}.

\end{abstract}

%
%
    \maketitle
	
    \presec
\section{Introduction}
\label{sec:intro}

\subsection{Background and Motivation}
\postsub
\label{intro:background}

\begin{sloppypar}

Data stream monitoring has become instrumental in deriving valuable insights for a myriad of applications, such as anomaly detection \cite{pang2014anomaly,eswaran2018spotlight,miao2022burstsketch,manzoor2016fast}, network measurement \cite{namkung2022sketchlib,ding2023bitsense}, predictive maintenance \cite{zenisek2019machine,sahal2020big,wolfartsberger2020data}, and customer behavior analysis \cite{kwan2005customer,alfian2019customer}, among others \cite{divan2020architecture,schweller2007reversible,kargupta2006board}.
These tasks present significant challenges due to the high volume and velocity of data streams coupled with the constraints on memory and processing time in practical applications.  
A data stream is a series of items, each of which is a \textit{key-value} pair. Items that share the same \textit{key} compose a flow, and the \textit{key} can be considered as the flow ID\footnote{A flow ID is typically defined as a  five-tuple: source IP address, destination IP address, source port, destination port, and protocol.}. The \textit{value} is the metric that needs to be monitored. All flows' items are intermingled together in a data stream ($e.g., \ DS=\{ \langle a,3 \rangle ,  \langle a,2 \rangle ,  \langle b,5 \rangle ,  \langle d,1 \rangle ,  \langle a,4 \rangle , \ldots\}$). 

Sketch, a probabilistic data structure, has gained widespread use in data stream monitoring due to its impressive speed and performance within the confines of limited memory. As of now, most tasks related to data stream monitoring can be categorized into two types:

\begin{itemize}



    

    
    \item \bbb{Monitoring the sum of \textit{value} per \textit{key}.} Such tasks record \textit{value} sum of every flow in the data stream, with potential application in anomaly detection, healthcare analysis, social media analysis, and so on.
    Classical works include CM Sketch \cite{cormode2005improved}, CU Sketch \cite{estan2003new}, and Count Sketch \cite{charikar2002finding}.


    


    \item \bbb{Monitoring the cardinality of \textit{value} per \textit{key}.} Such tasks record \textit{value} cardinality of every flow in the data stream. For instance, the \textit{key} could represent the source address, while the \textit{value} could represent the destination address. They are particularly useful in anti-attack scenarios, such as DDoS or superspreaders.
    CSM Sketch\cite{li2012per} proposes the randomized counter sharing scheme to solve the problem.
    


\end{itemize}




In this paper, we define a third task type: 
\begin{itemize}

\item \bbb{Monitoring the variation of \textit{value} per \textit{key}.}

\textcolor{gray}{SELECT} \textit{key, $value^{(key)}_{i}$}

    \textcolor{gray}{FROM} $Data\_Stream$
    
    \textcolor{gray}{WHERE} $value^{(key)}_i-value^{(key)}_{i-1}>Threshold$

    \vspace{0.03in}


\end{itemize}

Such tasks monitor the variation in the \textit{value} of items belonging to the same flow. If the difference of \textit{value} between two adjacent items in the same flow exceeds a threshold, we report the flow and the \textit{value} of the latter item.

A scenario corresponding to this task is the real-time detection of \textit{flow gaps} in the data stream \cite{golab2009sequential,szlichta2016effective,golab2010data,golab2003issues,golab2005update}. In this context, the \textit{value} corresponds to the sequence number each item carries, such as the sequence number of frames in videos or the Identification field in an IP header. The sequence number indicates the relative position of an item in a flow, incrementing by one for each new item. Consequently, if we observe a nonconsecutive ($\geq 2$) variation in the sequence number ($seq$) between two adjacent items belonging to the same flow, it suggests the occurrence of item loss or item reordering during the data transmission process. Such flow anomalies are defined as the \textit{flow gap}. If a \textit{flow gap} is too large, it may have a significant negative impact on the Quality of Service (QoS). Therefore, it becomes crucial to report any incident where the variation in $seq$ between two adjacent items within the same flow exceeds a predefined threshold, which we call \textit{major flow gap}.




\reviewB{While consistent dropping of a few items over a prolonged period is undeniably important, the sudden and substantial loss of data items, as detected by major-flow-gap, can have immediate and severe consequences in certain scenarios. For example:}


\bbb{Use case 1: Evaluation of real-time video and audio communication quality\cite{mullin2001new,imtiaz2009performance,al2012survey,watson2001new,watson1996evaluating,jokisch2016audio,rubino2006quantifying,mohamed2002study}.}
A frame drop occurs when one or more frames in a video or audio sequence are lost in the transmission process, which is frequently attributed to network congestion or an unstable internet connection.
\reviewB{While people often neglect or are able to tolerate a certain level of frame loss in video and audio if the number of frames lost at once is small. A one-time drop of many frames can lead to missing a crucial piece of information or significant visual or auditory disruptions that can affect the overall user experience. Therefore, reporting the \textit{major flow gaps} is our primary goal in order to evaluate the communication quality. Besides, it is common that a few frames are lost in the video and audio data stream. It would be burdensome to report and analyze them all.}





\bbb{Use case 2: Detection of malicious packet dropping attack in network\cite{hayajneh2009detecting,sen2007distributed,just2003resisting,chaudhary2014design,terence2019novel,shu2014privacy}.}
Malicious packet dropping attacks pose a significant threat to network security as they inhibit the transmission of regular packets, potentially leading to the loss of crucial information.
\reviewB{Every network system has certain resilience towards attacks. So the situation we need to worry about is when an attack cannot be automatically dealt with. A \textit{major flow gap} can indicate a more aggressive form of attack or a change in attack strategy, which requires immediate attention. }
Our algorithm provides a viable countermeasure to this threat. It enables continuous monitoring of packet drops by tracking the variation in sequence numbers for each flow.  Furthermore, our algorithm is designed compactly, allowing it to operate effectively even on devices with limited resources. This capability assists network administrators in pinpointing the causes of packet drops and effective actions can be taken to ensure network performance.
Drawing from the use cases outlined above, we delineate two key requirements for a solution designed for the real-time detection of \textit{flow gaps}:

\bbb{R1: Speed.}
Each item should be processed rapidly enough, which necessitates a solution that is not only simple but also ingenious. The reason behind this lies in the fact that data transmission typically occurs at high speed. Furthermore, the processing nodes along the data transmission path are typically constrained by limited computational power, while the volume of the data stream is immense.

\bbb{R2: Accuracy.} 
The \textit{flow gap} reports should be highly accurate with minimal memory overhead. False positive reports would impose additional scrutiny burdens on the network, and failure to detect \textit{flow gaps} may lead to crucial data losses being overlooked. Besides, to fulfill \textbf{R1}, the memory cost should be minimized to enable the data structure to fit within CPU caches.


\end{sloppypar}

\presub
\subsection{Our Proposed Solution}
\postsub
\label{intro:propose}

In this paper, we introduce \algo as a solution for real-time detection of \textit{flow gaps}, achieved by monitoring the variation of sequence numbers within a flow. We design \algo using Sketch, a concise probabilistic data structure that allows for rapid data processing with manageable errors, as a foundation. Our solution realizes both of the design goals: accuracy and speed.
The design of our solution rests on two key ideas: the \textit{similarity absorption} technique and the $civilian$-$suspect$ mechanism. Furthermore, we develop two versions of \algo. The first version incorporates only the \textit{similarity absorption} technique, while the second integrates both the \textit{similarity absorption} technique and the $civilian$-$suspect$ mechanism. The first version, known as Speed-Oriented \algo (\algo-SO), excels in terms of speed while maintaining good accuracy. The second version, termed Accuracy-Oriented \algo (\algo-AO), prioritizes accuracy without significantly compromising speed.
Within the context of this paper, we define the number of items within a flow as the flow size, or item frequency. A flow with a large number of items is considered a large flow, while one with a small number of items is considered a small flow.  An abnormal flow is characterized by \textit{flow gaps} caused by item loss or reordering, while flows devoid of such problems are classified as normal flows.



The \textit{similarity absorption} technique leverages the sequence number as an index to achieve efficient matching, which
saves memory and increases speed. Conventionally, to store the information related to a flow, it's necessary to record the flow ID for future matching. \reviewA{However, in \algo, the need to record the flow ID is eliminated, as we utilize the sequence number for the matching task. The \textit{similarity absorption} technique is to select the recorded sequence number closest to the sequence number of the incoming item as the matched sequence number, which corresponds to the smallest degree of anomaly.} This method is chosen based on the premise that normal flows are prevalent in the data stream and severe anomalies are less likely to occur. In practical scenarios, the flow ID typically consists of a 13-byte five-tuple, whereas the sequence number requires no more than 4 bytes (for instance, the Identification field in IP is 2 bytes, and the Packet Number field in QUIC is 4 bytes). Hence, omitting the flow ID leads to considerable memory savings. It also enhances speed since a single field access can accomplish both matching and inspection tasks.

The $civilian$-$suspect$ mechanism proficiently integrates broad monitoring of all flows with detailed scrutiny of suspicious flows. We divide our data structure into two parts: (1) $civilian$ and (2) $suspect$. \reviewA{The $civilian$-$suspect$ mechanism functions as a digital police force within the network. The $civilian$ component operates like a cyber patrol, broadly monitoring all citizens. When this patrol identifies any suspicious activity, the corresponding citizen transitions to a $suspect$ status. Then, a comprehensive inspection process begins, akin to a detailed investigation conducted by a detective, ensuring network safety and integrity.}
This strategy leverages our limited resources efficiently, performing a broad inspection of all flows, then focusing more detailed investigation on suspicious flows as flagged by the initial examination.
We use two methods to further improve the accuracy of \algo. First, we randomize flows' sequence number and put flows into small buckets. This ensures that the sequence numbers from different flows are well-distributed, and only a limited number of flows share the entire sequence number range. When a bucket becomes saturated, we implement the Least Recently Used (LRU) replacement policy to update the bucket residents.
Secondly, we enhance accuracy by utilizing fingerprints, which are derived by mapping the original long ID to a shorter bit sequence using a hash function. The fingerprint assists the sequence number in the matching process.

To further boost \algo's speed, we utilize the Single Instruction and Multiple Data (SIMD) operations. In the typical implementation of the LRU replacement policy, storing and comparing timestamps leads to additional memory and time costs. The SIMD operations allow us to eliminate these costs by processing all flows in a bucket in parallel, keeping them in chronological order. Consequently, the least recently used flow is simply the last one in the bucket. This approach obviates the need for storing timestamps and allows us to identify the least recently used flow in $O(1)$ time, as opposed to $O(w)$, where $w$ denotes the number of flows maintained in each bucket. It also reduces the time complexity of matching from $O(w)$ to $O(1)$. Although there is a limit on the number of flows we can operate on in parallel using SIMD, the experiments in Section~\ref{art:exp:para} shows that the optimal bucket size $w$ falls under the limit.

We devise a Straw-man solution (Section~\ref{strawman}) for better comparison.
Extensive experiments were performed to evaluate the accuracy and speed of our algorithm. The results show that the accuracy of \algo-SO and \algo-AO achieves about 1.4 and 1.6 times higher than the Straw-man solution, respectively. 
\algo-SO is 3 times faster than the Straw-man solution, while \algo-AO is 2.5 times faster than the Straw-man solution.
The experiments also show that \algo is memory-efficient, for it only needs 64 KB to handle 55M items.
%
We also theoretically prove that \algo achieves a high accuracy with limited memory. 



\subsection{Key Contributions:}
\begin{itemize}
    \item To the best of our knowledge, we are the first work that formalizes the task of monitoring the variation in the \textit{value} of items belonging to the same flow, thus addressing a previously unexplored area in the research fields.
    \item We develop the \textit{similarity absorption} technique, designed to save memory and increase speed.
    \item We develop the $civilian$-$suspect$ mechanism, which optimally leverages limited resources to provide both comprehensive monitoring of all flows and detailed scrutiny of suspicious flows.
    \item We implement \algo-SO and \algo-AO on the CPU platform. Compared to the Straw-man solution, both \algo-SO and \algo-AO not only offer superior accuracy but also offer superior speed. All associated codes are publicly accessible on GitHub \cite{link}.
\end{itemize}

    \presec
\section{BACKGROUND AND Related Work}
\postsec
\label{sec:relate}

\reviewC{
To the best of our knowledge, no previous work can be directly applied to accomplish real-time \textit{flow gap} monitoring in data streams.  Though some research has targeted the detection of packet dropping attacks in networks, these studies have different problem definitions and application scenarios compared to our work. 
Besides, our work meets both requirements delineated in Section~\ref{intro:background}, \textbf{R1: Speed} and \textbf{R2: Accuracy},  which are not met by previous approaches.
}

\subsection{Statistics-based mechanisms} 
The methodologies presented in \cite{sultana2011provenance,shu2014privacy} rely on historical records to form a correlation or distribution pattern.  If the following packet do not conform the pattern established by their predecessors, the system will report a packet dropping attack. This approach, however, can only detect packet drops after a significant number of packets have been collected and processed, thereby lacking in speed. Additionally, maintaining records of substantial packet information is memory consuming. The need to transmit topology data to a base station for processing also imposes a significant burden on the network. Furthermore, constant network fluctuations necessitate frequent pattern modifications, making the system susceptible to false-positive attack reports due to changing network conditions. These models may also overlook attacks exhibiting changing behaviors.


\subsection{Trust-based mechanisms}
In \cite{s2012efficient,sanchez2015model,rmayti2017stochastic}, each node is required to monitor its neighboring nodes to gather various status information about the received and forwarded packets. The observed features are then mathematically modeled and used as input to a probability estimate function to determine whether a particular node is malicious. However, this method has notable drawbacks. Firstly, it necessitates the transmission of a substantial number of packets before any packet drops can be detected, and recording this vast amount of information is memory consuming .
Secondly, this method demands the exchange of numerous messages between nodes, which consumes significant bandwidth.


\subsection{ML-based mechanisms}
The approaches adopted by \cite{joseph2010cross,kurosawa2007detecting} incorporate Machine Learning for their detection methodology. They initially gather multidimensional features from the network, MAC, and physical layers by monitoring events and computing topology statistics. These features are then fed into a model to classify an unknown behavior. However, a primary issue with this kind of solution is that it is heavyweight for energy-limited nodes to extract features from packets and do machine learning at the same time. The processing overhead and memory cost are unacceptable.

\presec
\section{Problem Statement}
\postsec
\label{sec:problem}
 
\begin{sloppypar}
\begin{definition}
    \textbf{Data stream.} The data stream is a series of items appearing in sequence. For a given data stream $\mathcal{DS} = \{e_1,e_2,e_3,\ldots,e_i,\ldots\},$ each item $e$ in $\mathcal{DS}$ contains an ID field and a sequence field: $e= \langle FID,SEQ \rangle $. $FID$ stands for the flow to which $e$ belongs, while $SEQ$ stands for its relative position in the flow. In other words, $FID$ serves as an unordered index and $SEQ$ serves as an ordered serial number.
    


\end{definition}    

\reviewC{
\begin{definition}
\textbf{Monitoring flow gaps.}
The \textit{flow gap} describes the nonconsecutive variation ($\geq 2$) in $SEQ$ between adjacent items from the same flow. This irregularity may arise due to item loss or item reordering throughout the data transmission process. The magnitude of a flow gap is indicative of the severity of the issue it represents. 
Specifically, the relationship between the variation in $SEQ$ of two adjacent items and the corresponding situation is delineated as follows:
\end{definition}    
}

\begin{equation} \label{situation}
situation=\left\{
\begin{aligned}
&matched\left\{
\begin{aligned}
neglect, & \quad var \in (-\mathcal{T}_2,1) \\
normal, & \quad var=1 \\
minor \ gap, & \quad var \in [2,\mathcal{T}_1) \\
major \ gap, & \quad var \in [\mathcal{T}_1,\mathcal{T}_2) \\
\end{aligned}
\right.\\
&not \ matched, \quad var \in (-\infty,-\mathcal{T}_2]\cup [\mathcal{T}_2,+\infty) 
\end{aligned}
\right.
\end{equation}

\reviewC{
\bbb{Explanation.} In the above equation, the \textit{major gap} is the condition we aim to detect and report. Our focus is on identifying only positive gaps, because negative gaps are usually accompanied by their positive counterparts. Therefore, we can neglect the variation $\in (-\mathcal{T}_2,1)$. If the situation falls in $\{neglect,\ normal,\ minor \ gap,\ major\ gap\}$, we call it $matched$, because we deem the two items belong to the same flow.
}

\reviewA{
$\mathcal{T}_1$ and $\mathcal{T}_2$ serve as valuable parameters to bound the detection scale of the algorithm, ensuring efficient operation, clear diagnostics, and adaptability across different applications. $\mathcal{T}_1$ is designed to adjust the tolerance level of flow gaps. A \textit{flow gap} that surpasses $\mathcal{T}_1$ should be detected and reported. 
}

\reviewA{
As for $\mathcal{T}_2$, it guarantees controlled response time and limited computational resources. The reason why we set $\mathcal{T}_2$ is similar to why we set a "timeout" threshold in many protocols (\eg, TCP, HTTP, DNS).  In dynamic environments, where data streams are volatile, waiting indefinitely for a response can be detrimental, because we cannot let a possibly broken connection take up the memory and computational resources forever. Besides, as shown in Figure~\ref{fig:datasetDistri:gap}, most of the gaps have small size. Therefore, by capping the detectable gap with $\mathcal{T}_2$, we bring efficiency and predictability to the system.  
}

\reviewA{Depending on the application, $\mathcal{T}_1$ and $\mathcal{T}_2$ can be adjusted. For high-reliability applications, smaller $\mathcal{T}_1$ and $\mathcal{T}_2$ might be appropriate, forcing a quicker reconnection or system response. For more lenient applications, $\mathcal{T}_1$ and $\mathcal{T}_2$ can be increased, allowing for longer interruptions before deeming the flow undetectable.}




\end{sloppypar}
    
\begin{table}[!ht]
    \vspace{0in}
\setlength{\abovecaptionskip}{0.1cm}
    \caption{Notations.}
    \begin{tabular}{c|l}
    \hline
    \textbf{Symbol} & \textbf{Meaning}\\
    \hline
    $FID$ & the flow ID of an incoming item\\
    $SEQ$ & the sequence number of an incoming item\\
    $SEQ^c$ & the sequence number recorded in a cell\\
    $B$ & the bucket array of \algo\\
    $d$ & the number of buckets in \algo\\
    $w$ & the number of cells in a bucket\\
    $c$ & the number of \textit{civilians} in a bucket of \algo-AO\\
    $s$ & the number of \textit{suspects} in a bucket of \algo-AO\\
    $B[i][j]$ & the $SEQ^c$ in the $j^{th}$ cell of the $i^{th}$ bucket\\
    $h(\cdot)$ & a hash function \\
    \hline
    \end{tabular}
    \label{tab:note}
    \vspace{-0.2in}
\end{table}

\presec
\section{\algo}
\postsec

In this section, we propose our solution in details. The symbols frequently used are shown in table I.

\subsection{Overview}
We devised two versions of \algo: Speed-Oriented \algo (\algo-SO) and Accuracy-Oriented \algo (\algo-AO). \algo-SO is neat and fast, capable of achieving high accuracy on most real-world scenarios. 
While \algo-AO is a little more complicated, but can achieve robust performance even on extreme scenarios. \algo-SO (Section~\ref{sec:SO}) utilizes the \textit{similarity absorption} technique, which uses the sequence number to serve as the index number to accomplish matching. \algo-AO (Section~\ref{sec:AO}) utilizes both the \textit{similarity absorption} technique and the $civilian$-$suspect$ mechanism. The $civilian$-$suspect$ mechanism proficiently integrates broad monitoring of all flows with detailed scrutiny of suspicious flows. Furthermore, we develop multiple optimization techniques (Section~\ref{sec:OP}) to further improve the performance.


\subsection{Proposed Speed-Oriented solution}
\label{sec:SO}
\begin{figure}[!htbp]
    \centering
    \includegraphics[width=0.98\linewidth]{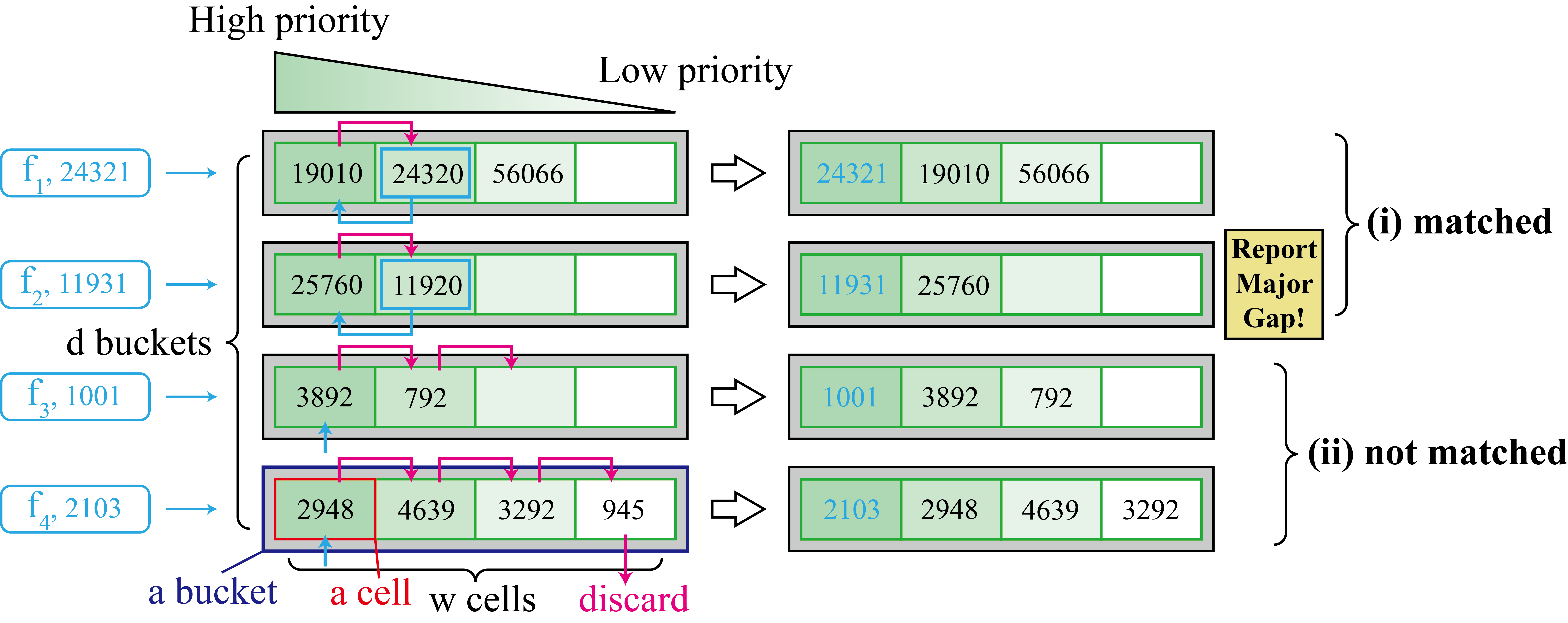}
    \caption{{ Illustration of \algo{}-SO.} This is a \algo{}-SO with $d=4$ buckets and each bucket has $w=4$ cells. $e_1$, $e_2$, $e_3$, $e_4$ arrive in order and are mapped to different buckets. The final result after the arrival of these four items lies on the right.
    }
    \label{fig:algo:SO}
    \vspace{-0.15in}
\end{figure}

    
    
    

    
    


\subsubsection{Data structure}~

As shown in figure~\ref{fig:algo:SO}, the data structure of \algo-SO is composed of a bucket array (denoted as $B$) of $d$ buckets, and a hash function $h(\cdot)$. There are $w$ cells in one bucket, each bearing a field of sequence number $SEQ^c$ in it. The $SEQ^c$ in the $j^{th}$ cell of the $i^{th}$ bucket is denoted as $B[i][j]$.


\subsubsection{Monitoring operations}~
\begin{sloppypar}

When a new item $e=\langle FID,SEQ \rangle $ arrives, we map it to the bucket $B[h(FID)\%d]$ and choose the cell with the $SEQ^c$ closest to $SEQ$.
Suppose the $j^{th}$ cell is the one. We plug $var=(SEQ-B[h(FID)\%d][j])$ into equation (\ref{situation}) to get the situation:
\begin{itemize}[leftmargin=*]
    \item \textbf{(i)} If the situation is $matched$, we call the chosen cell the matched cell. We set $B[h(FID)\%d][j]=SEQ$ if $SEQ$ is larger. If the situation is $major\ gap$, we report it.

    \item \textbf{(ii)} If the situation is $not\ matched$, meaning none of the $SEQ^c$ recorded in this bucket matches $e$, we will find a cell for $e$. If there exists an empty cell, we will initialize it by setting its $SEQ^c=SEQ$. If there is no empty cell in the bucket, we will empty the cell containing the least recently used (LRU) flow and initialize it by setting its $SEQ^c=SEQ$.

\end{itemize}


Traditionally, two primary methods are employed to implement the LRU (Least Recently Used) replacement policy. The first method is to store and compare timestamps, which leads to extra memory and time cost. The second method is to arrange the items chronologically, obviating the need for extra memory but necessitating a constant rearrangement of items each time a new one arrives. We choose the second method in our solution. To optimize the \textit{item rearrangement} process, we utilize the tool of Single Instruction and Multiple Data (SIMD) to decrease the time complexity of rearranging to $O(1)$. The detailed process of implementing LRU through SIMD is elaborated upon in Section~\ref{OP:SIMD}. The \textit{item rearrangement} operation proceeds as follows: the priority of cells within a bucket decreases sequentially from the first to the last. Every time a new item arrives, we put it in the first cell of the bucket while existing items are shifted backward. When we need to empty a cell, we simply empty the last cell because it holds the item of the lowest priority (\textit{i.e.,} the least recently used). 

The corresponding pseudo-code is shown in Algorithm~\ref{algo:SO}.
\begin{algorithm}[ht]
    \caption{Speed-Oriented solution}
    \label{algo:SO}
    
    \DontPrintSemicolon
    
    \KwInput{Item : $\langle FID, SEQ \rangle $, the bucket array : $B$}
    \KwOutput{$True$ if it is a major gap}
    \BlankLine
    $idx = h(FID)\%d$\\
    

    $match$ = the $j$ minimize \{$abs(SEQ - B[idx][j])$\}\\
    $var = SEQ - B[idx][match]$
    
    \If {$-\mathcal{T}_2 < var < \mathcal{T}_1$}
    {
       \textcolor{gray}{/* neglect, normal, minor gap */}\\
        $report = False$\\
        $B[idx][match] = max(SEQ, B[idx][match])$\\
        $B[idx].rearrange$\footnotemark $(match,0)$
    }
    \textbf{else} \If{$\mathcal{T}_1 \leq var < \mathcal{T}_2$}
    {
            \textcolor{gray}{/* major gap */}\\
            $report = True$\\
        $B[idx][match] = max(SEQ, B[idx][match])$\\
        $B[idx].rearrange$ $(match,0)$
    }
    \Else{
            \textcolor{gray}{/* not matched */}\\
            $report = False$\\
        $B[idx][w-1] = SEQ$\\
        $B[idx].rearrange$ $(w-1,0)$
    }
    
    \Return $report$

\end{algorithm}

\footnotetext{$bucket.rearrange(a,b)$ means that removing the $a$-th cell from $bucket$ and reinsert it into the $b$-th position, specifically, if $0\le b<a\le w-1$, it is equivalent to executing the following statement: $tmp = bucket[a]$; $bucket[a]=bucket[a-1]$; $bucket[a-1]=bucket[a-2]$; $\cdots$; $bucket[b+1]=bucket[b]$; $bucket[b]=tmp$}


\end{sloppypar}

\subsubsection{Example.}~

In the example shown in Figure~\ref{fig:algo:SO}, we set $\mathcal{T}_1=5$ and $\mathcal{T}_2=30$. 

\begin{itemize}[leftmargin=*]

\item When $e_1$=$\langle FID=f_1,SEQ=23421 \rangle $ arrives, it successfully finds a matched cell with $SEQ^c=23420$ in the bucket. Since $23421-23420=1$, the situation is considered \textit{normal}, Consequently, we update $SEQ^c$ to 23421 and rearrange the items in the bucket.

\item When $e_2=\langle FID=f_2,SEQ=11931 \rangle $ arrives, it successfully finds a matched cell with $SEQ^c=11920$ in the bucket. Since $\mathcal{T}_1 \leq 11931-11920<\mathcal{T}_2$, a $major\ gap$ is found and reported. Then, we update the $SEQ^c$ to 11931 and rearrange the items in the bucket. 

\item When $e_3=\langle FID=f_3,SEQ=1001 \rangle $ arrives, it fails to find a matched cell, prompting the insertion of $e_3$ into an empty cell in the bucket. The items in the bucket are subsequently rearranged.

\item When $e_4=\langle FID=f_4,SEQ=2103 \rangle $ arrives, it fails to find a matched cell or an empty cell. So it empties the last cell and inserts itself into the first cell. Finally, all the items in the bucket are rearranged.

\end{itemize}

\subsubsection{Analysis.} \label{SO:analysis}~

\bbb{Similarity Absorption.} In our design, the sequence number also serves as the index number to accomplish matching. The traditional way that uses the ID to do matching is both memory consuming and time consuming. In our solution, we use a greedy algorithm that selects the recorded sequence number closest to the sequence number of the incoming item as the matched sequence number.
This is because in most cases, we expect the
network to function properly. 
The recorded sequence number that indicates the smallest \textit{flow gap} is most likely to be the matched sequence number.

\bbb{Grouping.} To reduce the risk of sequence number collision ($seq$ collision), we conduct flow grouping in our approach. Each bucket in \algo-SO represents a group, with a group size of $w$. By grouping flows, we can assure that only a limited number of flows share the entire sequence number range.
Thus, A smaller group size implies a lower risk of $seq$ collision.
However, it is important to avoid setting $w$ too small. If $w$ is too small, it increases the likelihood of more than $w$ flows competing for the limited $w$ cells in a bucket. The experiment results related to this issue are shown in Section~\ref{art:exp:para}.



\subsection{Proposed Accuracy-Oriented solution}
\label{sec:AO}
\begin{figure}[!htbp]
    \centering
    \includegraphics[width=0.98\linewidth]{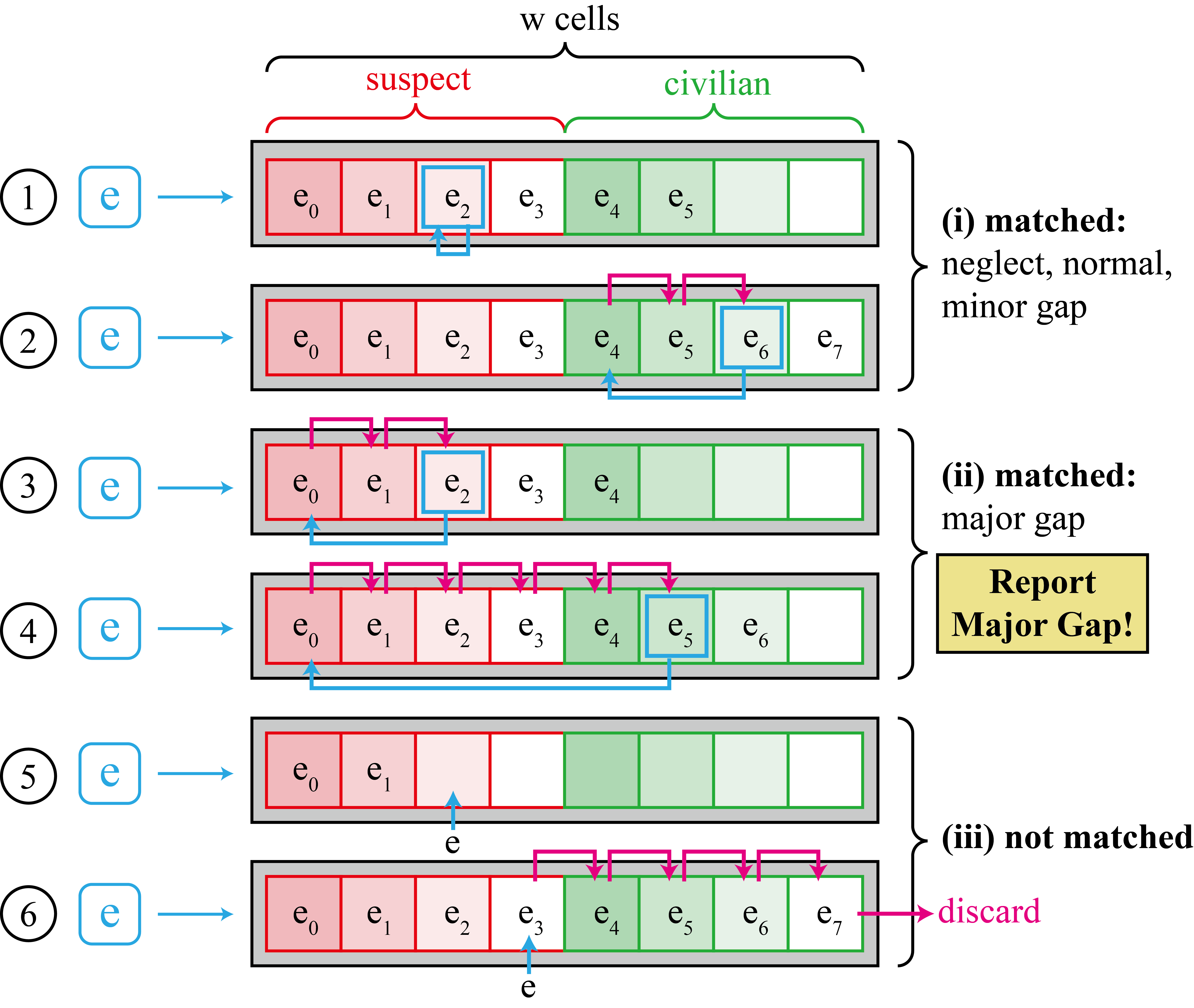}
    \caption{ Monitoring operations in \algo-AO. }
    \label{fig:algo:rules}
    \vspace{-0.15in}
\end{figure}

    
    
    
    


\subsubsection{Data structure}~

Similar to \algo-SO, the data structure of \algo-AO consists of a bucket array ($B$) with $d$ buckets and a hash function $h(\cdot)$. \algo-AO differs from \algo-SO in that each bucket in \algo-AO is divided into two parts: (1) $c$ civilians and (2) $s$ suspects ($c+s=w$). Each cell in a bucket contains a field $SEQ^c$. The priority of cells in the civilian part is updated based on the Least Recently Used (LRU) replacement policy. While in the $suspect$ part, we devise a different replacement policy called Least Recently Disrupted (LRD). Instead of assigning a flow with the highest priority\footnote{rearrange the items in a bucket according to the time order} when a new item of it arrives as what $civilian$ does, $suspect$ only assigns the flow with the highest priority when a new $major \ gap$ occurs in it. Every time we need to evict a flow, we evict the one that has not encountered $major \ gap$ for the longest time.
The $civilian$ part of \algo-AO functions as an overall monitor for all the flows, while the $suspect$ part focuses on keeping suspicious flows under constant and meticulous surveillance.

    
    
    
    

\begin{algorithm}[ht]
    \caption{Accuracy-Oriented solution}
    \label{algo:AO}
    
    \DontPrintSemicolon
    
    \KwInput{Item : $\langle FID, SEQ \rangle $, the bucket array : $B$}
    \KwOutput{$True$ if it is a major gap}
    \BlankLine
    $idx = h(FID)\%d$\\
    
    $match$ = the $j$ minimize \{$abs(SEQ - B[idx][j])$\}\\
    $var = SEQ - B[idx][match]$
    
    \If {$-\mathcal{T}_2 < var < \mathcal{T}_1$}
    {
        \textcolor{gray}{/* neglect, normal, minor gap */}\\
        $report$ = $False$ \\
        $B[idx][match] = max(SEQ, B[idx][match])$\\
        \If {$match \geq s$}
        {
            $B[idx].rearrange(match , s)$
        }}
    \textbf{else} \If{$\mathcal{T}_1 \leq var < \mathcal{T}_2$}
    {
        \textcolor{gray}{/* major gap */}\\
        $report$ = $True$\\
        $B[idx][match] = max(SEQ, B[idx][match])$\\
        $B[idx].rearrange(match , 0)$
    }
    \Else{
        \textcolor{gray}{/* not matched */}\\
        $report$ = $False$\\
        $empty\_cell$ = the index of first empty cell of $B[idx]$\\
        \If {$empty\_cell$ is not None and $empty\_cell < s$}
        {
            $B[idx][empty\_cell] = SEQ$\\
        }
        \Else
        {
            $B[idx][w-1] = SEQ$\\
            $B[idx].rearrange$ $(w - 1 , s-1)$\\
        }
    }
    \Return $report$

\end{algorithm}

\subsubsection{Monitor operation}~
\label{sec:AO:operation}
\begin{sloppypar}

When an item $e=\langle FID,SEQ \rangle$ arrives, we map it to the bucket $B[h(FID)\%d]$ and select the cell with the $SEQ^c$ closest to $SEQ$. Suppose the $j^{th}$ cell is the one selected. We plug $var=(SEQ-B[h(FID)\%d][j])$ into equation (\ref{situation}) to get the situation.
Figure~\ref{fig:algo:rules} shows the different cases of insertion. The $e_i(i\in \{0,1,2,3,4,5,6,7\})$ in the figure represents the item recorded in the $i^{th}$ cell.

\begin{itemize}[leftmargin=*]
    \item \textbf{Case (i)} The matched cell indicates one of the following situations: $neglect$, $normal$ or $minor \ gap$. If the matched cell is in the $suspect$ part (\ding{172}), we simply update the $SEQ^c$ without rearranging items. However, if the matched cell is in $civilian$ part, we not only update the $SEQ^c$ but also rearrange the matched cell to the first cell in the $civilian$ (\ding{173}).

    \item \textbf{Case (ii)} The matched cell indicates a $major \ gap$. First, we report the $major \ gap$. Next, no matter whether the matched cell is in $civilian$ (\ding{174}) or $suspect$ (\ding{175}), we update the $SEQ^c$ and rearrange it to the first cell in $suspect$. If the matched cell is in $civilian$, then $suspect$ must be full before $e$ arrives (we will explain \reviewC{later}). Therefore, the rearrange operation will force $suspect$ to transfer its last item to $civilian$.

    \item \textbf{Case (iii)} There is no matched cell in the bucket. In this situation, we fail to find the information about this flow, so we insert it to $suspect$ for further examination. Because we do not know whether it is an abnormal flow, we assign it the lowest priority when inserted to $suspect$. Specifically, if $suspect$ still has empty cells, we insert $e$ to the first (leftmost) empty cell (\ding{176}). If $suspect$ is full, we transfer the last (rightmost) item in $suspect$ to $civilian$, and insert $e$ to the last cell in $suspect$ (\ding{177}).

\end{itemize}

Using the above operations (\ding{172} to \ding{177}), flows may transition back and forth between the $civilian$ and $suspect$. A flow that becomes suspicious in $civilian$ will be transferred to $suspect$. A flow in $suspect$ that gets squeezed out by other more suspicious flows, which we call "exonerated", will be transferred to $civilian$. When a flow gets squeezed out by other flows in $civilian$, it will simply be discarded. This $civilian$-$suspect$ mechanism proficiently integrates broad monitoring of all flows with detailed scrutiny of suspicious flows. 

The corresponding pseudo-code is shown in Algorithm~\ref{algo:AO}. 
Further details on how these operations sufficiently meet our requirements will be explained in the following paragraphs.

At the beginning of the monitoring process, the data structure is empty and every flow is considered new due to the absence of historical data. We should insert all the flows to $suspect$ as priority, because space is abundant and we can give them as meticulous surveillance as possible. Therefore, when there are empty cells in $suspect$, it implies that the $civilian$ is entirely vacant. Conversely, when $civilian$ contains items, it indicates that the $suspect$ is at full capacity. This need is met by operation \ding{176}.

During the mid-phase of the monitoring process, for a large abnormal flow, if it is at $suspect$, operation \ding{174} will ensure its information is well stored, so we can detect $major\ gaps$ in time. If it is at $civilian$, operation \ding{173} will ensure its information is well stored. Once a $major\ gap$ happens, the flow will be transitioned to $suspect$ via operation \ding{175}. 
For a small abnormal flow, if it is at $suspect$, operation \ding{174} and \ding{172} prevent it from being displaced by normal large flows, thus guaranteeing its data is well stored for $major\ gap$ detection. If it is at $civilian$, it may get squeezed out by large flows. Nevertheless, if that happens, operation \ding{177} will send it to $suspect$, where its data will be well protected, enabling the detection of $major\ gaps$.


\end{sloppypar}

\subsubsection{\reviewC{Discussion}} \label{AO:analysis}~

\bbb{The motivation for \textit{suspect}.} 
The Least Recently Used (LRU) replacement policy implemented in \algo-SO inherently favors large flows. This favoritism is generally justified as large flows often represent crucial or high-priority data\footnote{Large flows frequently signify important data, and by giving these flows precedence, administrators can enhance the system's performance by ensuring the efficient delivery of key information.}. However, this LRU replacement policy does present a potential area for improvement. Specifically, there is a risk that smaller abnormal flows may be displaced by larger normal flows, in which case we lose the information we want and keep the information we do not need. The Least Recently Disrupted (LRD) replacing policy, which is used in the $suspect$ part, guarantees that 
no matter how large a normal flow is, it crowds out no small abnormal flow. Thus, all the information we want is well protected.






\bbb{The reason to keep \textit{civilian}.}
$Suspect$ is supposed to protect the flows that are encountering $major\ gaps$. However, if we do not know any information about the new flows coming into $suspect$, we may have let normal flows squeeze out abnormal flows that are already in $suspect$.
This is where the $Civilian$ becomes particularly valuable. It offers every flow the opportunity to have its \textit{flow gap} detected, acting as a crucial preliminary screening stage for potentially suspicious flows.
That is why we claim the $civilian$-$suspect$ mechanism proficiently integrates broad monitoring of all flows with detailed scrutiny of suspicious flows.

\subsection{Optimizations}
\label{sec:OP}

\subsubsection{SIMD Acceleration.} \label{OP:SIMD}~

Normally, to find a matched sequence number in a bucket, the time complexity is related to the bucket size $w$. Besides, it requires storing and comparing timestamps to realize LRU replacing policy, which leads to extra memory and time cost. Nevertheless, by using SIMD, we can operate on all fields in a bucket at once with only one instruction, reducing the time complexity of finding a matched sequence number to $O(1)$. As for the implementation of LRU, instead of storing and comparing timestamps, we sort all flows in a bucket according to the time order. Every time an item $e=\langle FID,SEQ \rangle $ arrives, we put the flow $FID$ to the first cell of the bucket and move other flows backward. 


Specifically, we can fetch all $w$ cells at once using the function $\_mm\_set1\_pi16()$, and use function $\_mm\_subs\_pi16()$ and $\_mm\_min\_pi16()$ to compare $SEQ$ with $SEQ^c$ in all the cells to try to find the matched cell. After updating the $SEQ^c$, we use the function $\_mm\_shuffle\_pi16()$ to rearrange all the flows according to the time order. Every operation above \reviewC{is implemented in C++ and} can be completed in one SIMD instruction ($i.e.$, one CPU cycle). For an example of the rearrange operation, suppose the $j^{th}$ item is the one we want to update, we move the $j^{th}$ item to the $0^{th}$ cell, the $0^{th}$ item to the $1^{st}$ cell, the $1^{st}$ item to the $2^{nd}$ cell, the $2^{nd}$ item to the $3^{rd}$ cell, ..., the $(j-1)^{th}$ item to the $j^{th}$ cell. The $(j+1)^{th}$ to $(w-1)^{th}$ items remain at their original cells. The
priority of cells is defined to diminish from first (leftmost) to last (rightmost). In this way, each time we insert an item $e=\langle FID,SEQ \rangle $, we update flow $FID$'s priority to the highest. When we need to empty a cell, we simply empty the last cell because it has the lowest priority (least recently used). When a bucket is not full, the empty cells will be all at the back of the bucket.

\vspace{0.3in}
\subsubsection{Sequence number randomizing} \label{OP:random}~

To mitigate $seq$ collisions caused by internal problems, such as the possibility of multiple flows starting with the same sequence number and thereby increasing the probability of sequence number collision, we preprocess the sequence number before it enters the \algo algorithm. This preprocessing involves the utilization of a hash function $b(\cdot)$ to generate a random bias $b(FID)$, which is then added to the original sequence number of every item in the same flow.
By adding this random bias to the sequence number, we achieve a desirable outcome: the range of the randomized sequence numbers for every flow is separate
from each other. Consequently, the risk of $seq$ collisions is significantly reduced.

\subsubsection{Fingerprint}~
\label{OP:FP}

In some extreme scenarios, some assistance on matching may further improve the accuracy. We use a hash function $\mathcal{P}(\cdot)$ to generate a fingerprint $\mathcal{P}(FID)$, which is a short bit sequence. We store the fingerprint with the sequence number in the cells. When a new item $e$ arrives, the matched cell must have both a matched fingerprint and a matched sequence number. \reviewA{ With memory set as a constant, when the length of fingerprint increases, the risk of hash collisions of fingerprint decreases. But the number of buckets in the data structure decreases because each fingerprint takes more space. Therefore, the choice of the optimal length of fingerprints is made by a good trade-off between the above two effects. Besides, a longer fingerprint requires longer computation time. 
\presec
\section{Mathematical Analysis}
\postsec
\label{sec:math}

\begin{sloppypar}

\update{
In this section, we provide the theoretical analysis of the performance of \algo. 
}




\remove{
Since we only focus on a single data stream in this section, we decided to use some symbols that are simpler than the definitions in Section \ref{related::pre} to describe theorems and proofs more intuitively.
We first define the concepts of data stream and frequency.
Let $\mathcal{S}=\{e_1,e_2,\cdots,e_m\}$ be a data stream contains $m$ items, 
where $e_j \in \{1,2,\cdots,n\}$, and we say that item $e_j$ appears at time $j$.
Let $f_{(i,j)}=\sum_{k=1}^j 1_{\{e_k=i\}}$ be the frequency of item $i$ at time $j$, and let $f_i=f_{(i,m)}$ be the frequency of the item $i$ in data stream $\mathcal{S}$.
}

For convenience, we first define some variables.

\begin{itemize}
    \item \textbf{Correct Instances (CI):} Number of correct instances.
    \item \textbf{Not-Reported Instances (NRI):} Number of unreported instances, $RR=1-\frac{NRI}{CI}$.
    \item \textbf{Recall Rate (RR):} Ratio of the number of correctly reported instances to the number of the correct instances.
\end{itemize}

Next, we assume:

\begin{itemize}

\item Flows are uniformly distributed in the $d$ buckets, which means each bucket might receive $t=\frac{N}{d}$ different flows, $N=\#(flow)$ is the number of flows. The flows also share the same $flow\ gap$ ratio $\beta$.

\item In each $A_{i}=\{flow\ |\ h(FID)\%d=i\}$, the next arriving item is independent to the existing items. Suppose the next item is $\langle FID,SEQ\rangle$ and a \textit{major gap} occurs, the \textit{major gap's} being reported indicates that the flow is contained in the existing items in the $i^{th}$ bucket.


\item \reviewA{Flows' size follows $Zipf$ distribution. $Zipf$ distribution is a commonly used approximation of reality. Flow sizes may vary greatly. The items from the large flows (or the elephant flows defined in \cite{Newdirections}) is the vast majority and 
deserves most attention\cite{Newdirections,elastic,k-elephant}.
Therefore, we assume the size of the $j^{th}$ largest flow is $L_{0}*j^{-\alpha}(1 < \alpha \leq 3)$. 
}
\end{itemize}

Normally, $N \gg d,\ t \gg 1$. With $d$ over 1000 and $M\ll d$, the first $M$ large flows are typically evenly distributed in $M$ different sets $A_{j_1}, A_{j_2}, A_{j_3},\cdots, A_{j_M}$
\reviewA{This is based on the independence of their FIDs' hash values.
It is just like $M$ balls randomly thrown into $d$ buckets. When $M\ll d$, the probability of $M$ balls thrown in different $M$ buckets is close to 1. In every $A_{j_i}$, the large flow is the vast majority since parameter $\alpha>1$.} Thus, we analyze one large flow and multiple small flows in each $A_{j_i}$.
\begin{lemma}
\label{lemma1}
    \reviewC{We denote the probability of i flows distributed in i different sets $A_{j_1}, A_{j_2}, A_{j_3},\cdots, A_{j_i}$ as $P_{diff}(i)$. For $M<d$ we have $\quad P_{diff}(M)>(1-\frac{M}{d})^{M-d}e^{-M}$.}
\end{lemma}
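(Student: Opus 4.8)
The plan is to recognize $P_{diff}(M)$ as the classical \emph{no-collision} probability for the balls-into-bins experiment already described before the lemma: the $M$ flows are thrown independently and uniformly into the $d$ sets $A_1,\dots,A_d$, and we want the probability that they occupy $M$ distinct sets. I would then reduce the stated bound to a sum-versus-integral comparison.

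First I would write down the exact value. The first flow may land anywhere, and the $k$-th flow must avoid the $k-1$ sets already occupied, so
\begin{equation}
P_{diff}(M) = \prod_{k=0}^{M-1}\left(1 - \frac{k}{d}\right) = \frac{1}{d^{M}}\prod_{k=0}^{M-1}(d-k).
\end{equation}
Taking logarithms and substituting $j = d-k$ gives $\ln P_{diff}(M) = \sum_{j=d-M+1}^{d}\ln j - M\ln d$.

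Next I would put the target bound into the same form. Taking the log of $(1-\tfrac{M}{d})^{M-d}e^{-M}$ and simplifying the $\ln d$ terms shows that the claimed inequality is equivalent to
\begin{equation}
\sum_{j=d-M+1}^{d}\ln j \;>\; d\ln d - (d-M)\ln(d-M) - M.
\end{equation}
The crux of the argument is the observation that the right-hand side is exactly $\int_{d-M}^{d}\ln x\,dx$, since $\left[x\ln x - x\right]_{d-M}^{d} = d\ln d - (d-M)\ln(d-M) - M$. Once this is recognized, the inequality becomes a comparison between the sum $\sum_{j=d-M+1}^{d}\ln j$ and this integral. Because $\ln$ is strictly increasing, on each unit interval $\int_{j-1}^{j}\ln x\,dx < \ln j$; summing over $j = d-M+1,\dots,d$ yields the strict inequality and completes the proof.

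I expect the main obstacle to be purely one of recognition rather than technique: spotting that the messy expression $(M-d)\ln(1-\tfrac{M}{d}) - M$ is the antiderivative $x\ln x - x$ evaluated across $[d-M,d]$, after which the right-endpoint Riemann-sum bound is routine. A secondary point to check is that $M<d$ (with $M,d$ integers) gives $d-M\ge 1>0$, so every logarithm and the integral are well defined; this is exactly the regime in which the lemma is stated to be used.
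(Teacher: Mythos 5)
Your proof is correct and follows essentially the same route as the paper's: both derive the exact product $\prod_{k}\left(1-\tfrac{k}{d}\right)$, take logarithms, and lower-bound the resulting sum by an integral of a monotone logarithmic function. The only difference is cosmetic --- you compare $\sum_{j=d-M+1}^{d}\ln j$ with $\int_{d-M}^{d}\ln x\,dx$ directly, whereas the paper compares $\sum_{j}\tfrac{1}{d}\log\left(1-\tfrac{j}{d}\right)$ with $\int_{0}^{M/d}\log(1-x)\,dx$, which is the same estimate after the substitution $x\mapsto d(1-x)$.
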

\begin{proof}
\reviewC{
For a good hash function, we can assume the independence and uniformity among the hashed values of different flow IDs.
}
\reviewC{
\begin{align*}
    \qquad \qquad \frac{P_{diff}(i+1)}{P_{diff}(i)}=P({i+1  \: \: differ  \: \: from 1,2,\dots,i})\\
\end{align*}
}
\vspace{-0.3in}
Therefore,
\reviewC{
\begin{align*}
    \qquad \qquad \quad&P_{diff}(i+1)\\
    =&\frac{P_{diff}(i+1)}{P_{diff}(i)}\frac{P_{diff}(i)}{P_{diff}(i-1)}\cdots \frac{P_{diff}(2)}{P_{diff}(1)}P_{diff}(1) \\
    &(P_{diff}(1)=1)\\
        =&\Pi_{j=1}^{i} P({j+1  \: \: differ  \: \: from 1,2,\dots,j})\\
    =&\Pi_{j=1}^{i}\frac{d-j}{d}\\
    =&\Pi_{j=1}^{i} (1-\frac{j}{d})
\end{align*}
}
\vspace{-0.1in}
let $i+1=M$, we get
\reviewC{
\begin{align*}   
    \qquad \qquad&P_{diff}(M)\\
    =&\Pi_{j=1}^{M-1} (1-\frac{j}{d}) \\
    =&exp(d\sum_{j=1}^{M-1}\frac{1}{d}log(1-\frac{j}{d}))\\
    >&exp(d\sum_{j=1}^{M-1}\int_{j/d}^{j+1/d}log(1-x))\\
    =&exp(d\int_{1/d}^{M/d}log(1-x))\\
    >&exp(d\int_0^{M/d}log(1-x)dx)\\
    =&exp((M-d)log(1-M/d)-M)\\
    =&(1-\frac{M}{d})^{M-d}e^{-M}
\end{align*}
}
\reviewC{
Furthermore, for fixed $M$, we have
\begin{align*}
    \quad \lim \limits_{d \to \infty} (1-\frac{M}{d})^{M-d}e^{-M}=\lim \limits_{d \to \infty}(1-\frac{M}{d})^{-d}e^{-M}=e^{M}e^{-M}=1
\end{align*}
Therefore, we have $\lim \limits_{d \to \infty}P_{diff}(M)=1$.
}
\end{proof}


Given a data stream and a sketch, the $j^{th}$ largest flow in $A_i$ is hereby denoted as $f_{i,j}$, while the size of $f_{i,j}$ is denoted as $L_{i,j}$.
\reviewC{
\begin{lemma}
\label{lemma2}
    The number of incorrectly reported instances of flow $f_{i,j}$ is smaller than $\beta*L_{i,j}(1-\frac{L_{i,j}}{\sum_{j} L_{i,j}})^w$, where $w$ is the number of cells in a bucket.
\end{lemma}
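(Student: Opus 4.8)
The plan is to read the quantity in the lemma as the expected number of genuine major gaps of flow $f_{i,j}$ that fail to be reported (its contribution to $NRI$), and to bound it by multiplying the number of such gaps by an upper bound on the per-gap miss probability. First I would fix the count of correct instances: under the uniform gap-ratio assumption, a flow of size $L_{i,j}$ contains $\beta L_{i,j}$ major gaps, so it suffices to show that each individual major gap is missed with probability at most $(1-\frac{L_{i,j}}{\sum_{k} L_{i,k}})^w$, after which linearity of expectation yields the stated product.

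Next I would use the second modeling assumption, under which a major gap of $f_{i,j}$ is reported exactly when $f_{i,j}$ currently occupies a cell in its bucket $A_i$; hence a miss is precisely the event that $f_{i,j}$ has already been evicted when the gap item arrives. The crux is therefore to control eviction under the LRU/rearrange policy of Algorithm~\ref{algo:SO}. I would establish the key structural fact: if at least one of the $w$ most recent arrivals into bucket $A_i$ belonged to $f_{i,j}$, then $f_{i,j}$ is still resident. The reasoning is that the latest such arrival reinserts $f_{i,j}$ at the front cell (position $0$), and each of the at most $w-1$ arrivals following it can push $f_{i,j}$ back by at most one position, so it cannot be flushed past the last of the $w$ cells. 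Contrapositively, a miss can occur only when none of the last $w$ arrivals into the bucket was from $f_{i,j}$.

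I would then invoke the within-bucket independence assumption: each arrival into $A_i$ is from flow $f_{i,j}$ with probability $p_j = \frac{L_{i,j}}{\sum_{k} L_{i,k}}$, since $f_{i,j}$ contributes $L_{i,j}$ of the $\sum_{k} L_{i,k}$ items reaching the bucket. Because the "missed" event is contained in the event that none of the previous $w$ arrivals was from $f_{i,j}$, its probability is at most $(1-p_j)^w$; multiplying by the $\beta L_{i,j}$ correct instances gives the claimed bound, with the strict inequality following from the containment of events being proper.

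I expect the main obstacle to be the structural LRU step: the $\mathit{rearrange}(a,b)$ operation is not a textbook cell-level LRU, since matched items may sit at arbitrary positions and be promoted from the middle while intermediate cells shift. I must verify carefully that a single recent access to $f_{i,j}$ really does guarantee residence through the next $w-1$ arrivals, i.e. that the "at most one displacement per foreign arrival" claim survives the shifting semantics of rearrange. A secondary subtlety is the independence assumption itself: treating consecutive arrivals as i.i.d.\ Bernoulli$(p_j)$ is an idealization, so I would flag that the result is a bound on the expected number of misses under this model rather than a deterministic guarantee.
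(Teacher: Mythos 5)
Your proposal is correct and follows essentially the same route as the paper's proof: interpret the bound as $\beta L_{i,j}$ correct instances times the probability $(1-L_{i,j}/\sum_k L_{i,k})^w$ that $f_{i,j}$ did not appear among the last $w$ arrivals to its bucket, using the stated independence idealization. You additionally spell out the LRU residence argument that the paper merely asserts, and you correctly read the lemma's ``incorrectly reported'' as the flow's contribution to the \emph{not-reported} instances (NRI), which is how the paper itself uses it.
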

}

\begin{proof}
    If $f_{i,j}$ have already arrived before the arrival of $w$ other arrived items, the $major\ gap$ would be reported. In view of the weak correlation between the FIDs of two adjacent items, we can consider them to be independent. The probability of $f_{i,j}$ has not arrived in the last $w$ items is $(1-L_{i,j}/(\sum_{j}L_{i,j}))^w$ and then we have
    $P(f_{i,j} \ arrived \ in \ w \ items)=1-(1-\frac{L_{i,j}}{\sum_{j} L_{i,j}})^w$.
    Therefore the NRI of flow $f_{i,j}$ is smaller than $\beta*L_{i,j}(1-\frac{L_{i,j}}{\sum_{j} L_{i,j}})^w$.
\end{proof}

\begin{theorem}
\label{theo1}
For a data stream obeying Zipf distribution with $\alpha>1$ , we have $RR > 1-2^{\alpha}(\alpha-1)^{\frac{1}{\alpha}-1}M^{-\frac{(\alpha-1)^2}{\alpha}}$ with the probability of $(1-\tfrac{M}{d})^{M-d}e^{-M}$. \: \: 
\end{theorem}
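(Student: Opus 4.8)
The plan is to condition on the favorable event that the $M$ largest flows land in $M$ distinct buckets, and then bound the aggregate recall rate by summing the per-flow not-reported estimate of Lemma~\ref{lemma2} over these $M$ dominant flows. First I would invoke Lemma~\ref{lemma1}: the event $E$ that the top $M$ flows occupy $M$ distinct buckets holds with probability at least $(1-\tfrac{M}{d})^{M-d}e^{-M}$, which is exactly the probability qualifier appearing in the statement. Everything afterwards is carried out conditioned on $E$, under which each of the $M$ relevant buckets $A_{j_k}$ contains one dominant large flow $f_k$ whose Zipf size is $L_k = L_0 k^{-\alpha}$, together with only small flows.

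Next, using $CI_{f_k}=\beta L_k$ for the number of true major gaps and the bound $NRI_{f_k}<\beta L_k\,(1-L_k/S_k)^w$ from Lemma~\ref{lemma2} (with $S_k=\sum_j L_{k,j}$ the per-bucket item total), I would aggregate over the $M$ dominant flows to obtain
\[
RR \;=\; 1-\frac{NRI}{CI}\;\ge\; 1-\frac{\sum_{k=1}^{M} L_k\bigl(1-\tfrac{L_k}{S_k}\bigr)^{w}}{\sum_{k=1}^{M} L_k}.
\]
I would then split the numerator at a threshold rank $K$: for $k\le K$ the flow is large enough that the eviction factor $(1-L_k/S_k)^w$ is small, whereas for $k>K$ I simply bound that factor by $1$, so the tail contributes at most $\sum_{k>K}L_k/\sum_{k\le M}L_k$. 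Estimating the Zipf tail by an integral gives $\sum_{k=K+1}^{M}L_k < L_0 K^{1-\alpha}/(\alpha-1)$, and I would lower-bound the denominator by its leading terms so that the ratio reduces to a function of $K$ alone.

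Finally I would optimize the threshold, choosing $K\sim M^{(\alpha-1)/\alpha}$ to balance the two contributions; substituting back produces the tail exponent $(1-\alpha)\cdot\tfrac{\alpha-1}{\alpha}=-\tfrac{(\alpha-1)^2}{\alpha}$, i.e. a factor $M^{-(\alpha-1)^2/\alpha}$, together with the prefactor $2^{\alpha}(\alpha-1)^{1/\alpha-1}$ coming from the $1/(\alpha-1)$ of the tail integral, the constant hidden in $K$, and the factor $2$ from combining the two pieces of the split.

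The hard part will be this last step: matching the constant $2^{\alpha}(\alpha-1)^{1/\alpha-1}$ exactly requires careful bookkeeping of the integral approximations for both the Zipf tail and the denominator, and a precise (including rounding) choice of the threshold $K$ so that the eviction term for $k\le K$ and the tail term for $k>K$ collapse into a single clean power of $M$. A secondary subtlety is justifying that the eviction factor is genuinely negligible in the head range, which rests on the large flow dominating the item total $S_k$ of its own bucket; making this quantitative is what ultimately fixes the admissible scale of $K$ and hence the final exponent.
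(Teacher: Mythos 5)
Your proposal follows essentially the same route as the paper's proof: condition on the event of Lemma~\ref{lemma1} (which supplies the stated probability), apply the per-flow bound of Lemma~\ref{lemma2}, split the Zipf sum at a threshold rank scaling as $M^{(\alpha-1)/\alpha}$ (the paper's $M_1$, defined via $L_M=\tfrac{M^{1-\alpha}}{\alpha-1}$), bound the tail by the integral $L_0K^{1-\alpha}/(\alpha-1)$, and balance the two pieces to obtain the exponent $-\tfrac{(\alpha-1)^2}{\alpha}$ and the prefactor $2^{\alpha}(\alpha-1)^{\frac{1}{\alpha}-1}$. The only cosmetic difference is that the paper lower-bounds $CI$ by the full Zipf sum $\beta L_0/(\alpha-1)$ and lets the tail of the $NRI$ sum run over all flows beyond $M_1$ rather than stopping at $M$, but this does not change the argument.
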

\begin{proof}
\reviewC{
    From Lemma~\ref{lemma1}, we know that with the probability of $(1-\frac{M}{d})^{M-d}e^{-M}$, the largest $M$ flows will be distributed in $M$ different sets $A_{j_1}, A_{j_2}, A_{j_3},\cdots, A_{j_M}$.
    Let $L_M=\frac{M^{1-\alpha}}{\alpha-1}$. We have $L_M=\int_{M}^{+\infty}x^{-\alpha}dx>\sum_{i=M+1}^{+\infty}i^{-\alpha}$.$\quad$Besides, we introduce a middle variable $M_1$. $M_1$ is defined as the largest integer such that $L_M<M_1^{-\alpha}$. Hence, we have $(M_1+1)^{-\alpha}<L_M<M_1^{\alpha}$.\\
}

    Assume the $i^{th}(1\leq i\leq M)$ largest flow is located in set $A_k$, $L_{k,1}=L_0*i^{-\alpha}$.\\
    
    Let $p_i=\frac{L_{k,1}}{\sum_{A_k}L_{k,j}}\quad$, consider$\quad \sum_{A_k}L_{k,j}=L_{k,1}+\sum_{j\neq1}L_{k,j}<L_{k,1}+\sum_{i=M}^{+\infty}L_0 *i^{-\alpha}<L_{k,1}+L_M*L_0$,we have$\quad p_i>\frac{L_{k,1}}{L_{k,1}+L_M*L_0}$.\\

    From ~\ref{lemma2}, we know that the NRI of $i^{th}$ flow is smaller than $\beta*L_{k,1}*(1-p_i)^w$, so:
\reviewC{
    \begin{align*}
        NRI&=\sum_{i=1}^{N} NRI_{i}\\
        &\leq \sum_{i=1}^{M_1}NRI_{i}+\sum_{i>M_1}\beta*L_{k,j}(related \: \: to \: \: i)\\
        &\leq        \beta(\sum_{i=1}^{M_1}L_0*i^{-\alpha}*(1-p_i)^w+\sum_{i>M_1} L_0*i^{-\alpha})\\
        &< \beta L_0(\sum_{i=1}^{M_1}i^{-\alpha}(\frac{L_M}{i^{-\alpha}+L_M})^w+\frac{M_1^{1-\alpha}}{\alpha-1})\\
        &< \beta L_0(\sum_{i=1}^{M_1} (L_M)^w * i^{\alpha(w-1)}+\frac{M_1^{1-\alpha}}{\alpha-1})\\
        &< \beta L_0((L_M)^{w}*\frac{M_1^{\alpha(w-1)+1}}{\alpha(w-1)+1}+\frac{M_1^{1-\alpha}}{\alpha-1})\\
        &<\beta L_0((M_1^{-\alpha})^{w}*\frac{M_1^{\alpha(w-1)+1}}{\alpha(w-1)+1}+\frac{M_1^{1-\alpha}}{\alpha-1})\\
        &<\beta L_0(\frac{M_1^{1-\alpha}}{\alpha(w-1)+1}+\frac{M_1^{1-\alpha}}{\alpha-1})\\
        &<\beta L_0(\frac{2M_1^{1-\alpha}}{\alpha-1})\\
        &= \beta L_0\frac{2(M_1+1)^{1-\alpha}(\frac{M_1}{M_1+1})^{1-\alpha}}{\alpha-1}\\
        &<\beta L_0 \frac{2^{\alpha}L_M^{-\frac{1-\alpha}{\alpha}}}{\alpha-1}=\frac{\beta L_0}{\alpha-1}\frac{2^{\alpha}M^{-\frac{(\alpha-1)^2}{\alpha}}}{(\alpha-1)^{1-\frac{1}{\alpha}}}
    \end{align*}
}
\reviewC{
    With $CI=\beta(\sum_{i}L_0*i^{-\alpha})>\beta(\int_{1}^{+\infty}\frac{x^{-\alpha}}{\alpha-1})= \tfrac{\beta L_0}{\alpha-1}$ , we have
    }
\reviewC{
    $$
    RR=1-\frac{NRI}{CI}>1-2^{\alpha}(\alpha-1)^{\frac{1}{\alpha}-1}M^{-\frac{(\alpha-1)^2}{\alpha}}
    $$
    }
\end{proof}

\reviewC{
For $1<\alpha<3$, the constant $2^{\alpha}(\alpha-1)^{\frac{1}{\alpha}-1}<6$. Therefore, $\lim \limits_{M \to \infty}RR=1$. 
}   




\end{sloppypar}

\presec
\section{Experiments}
\postsec
\label{sec:exp}

\begin{figure*}[!ht]
	\centering
        \subfigure[CAIDA]{
		\begin{minipage}[t]{0.23\textwidth}{
		\begin{center}
		\includegraphics[width=\textwidth]{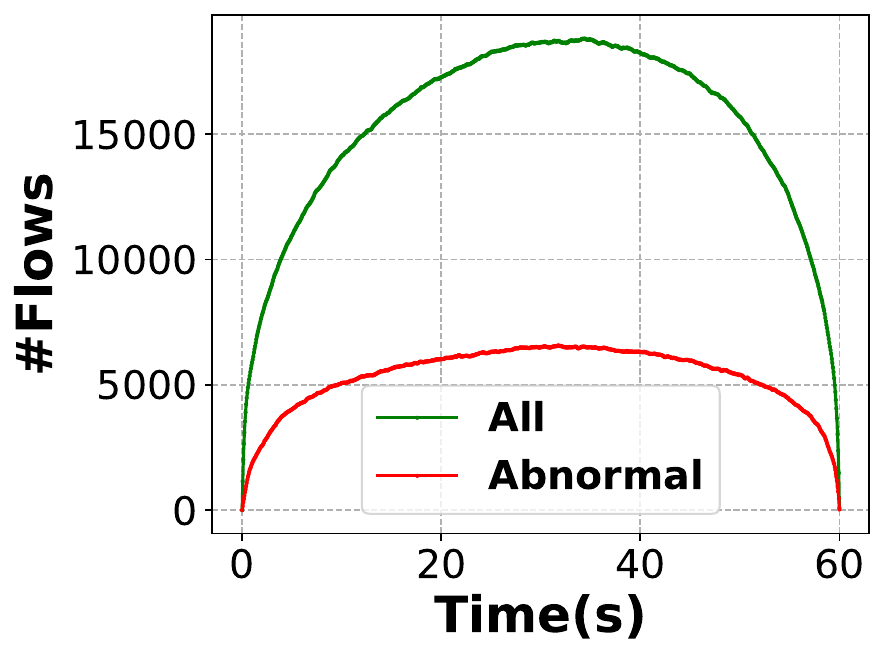}
		\end{center}
		}
		\label{concur-CAIDA}
		\end{minipage}
	}
	\subfigure[MAWI]{
		\begin{minipage}[t]{0.23\textwidth}{
		\begin{center}
		\includegraphics[width=\textwidth]{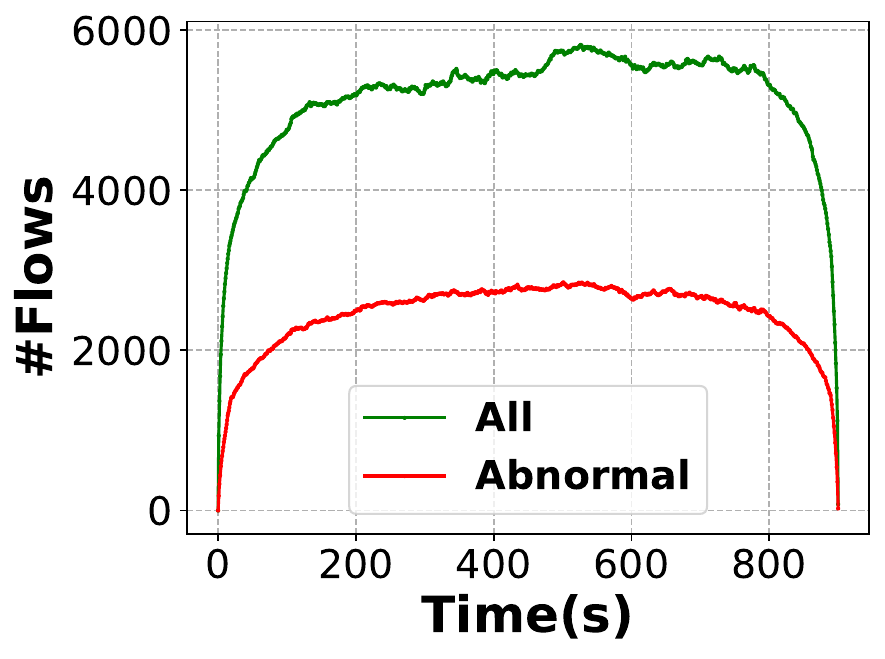}
		\end{center}
		}
		\label{concur-MAWI}
		\end{minipage}
	}
	\subfigure[MACCDC]{
		\begin{minipage}[t]{0.23\textwidth}{
		\begin{center}		
		\includegraphics[width=\textwidth]{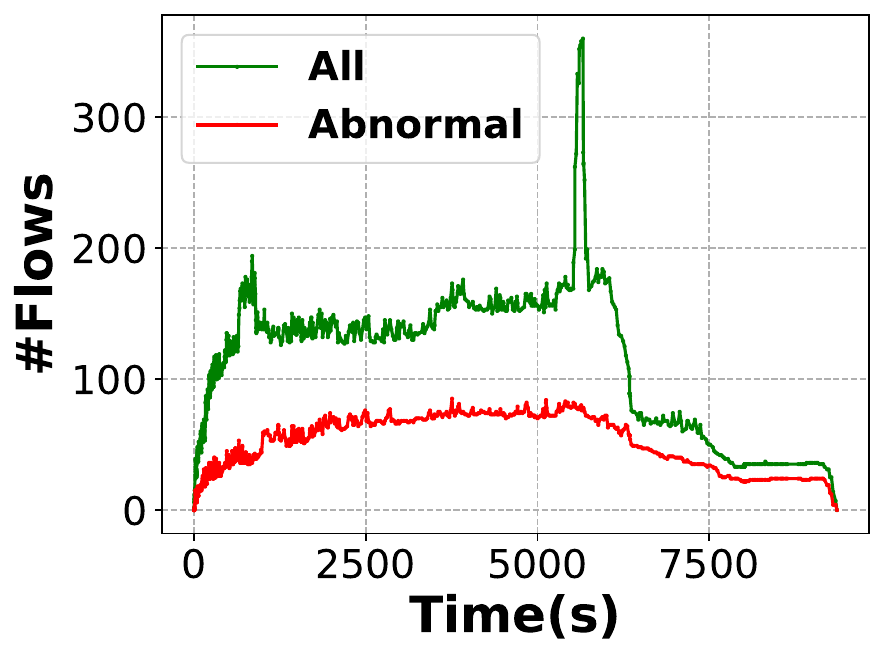}
		\end{center}
		}
		\label{concur-MACCDC}
		\end{minipage}
	}
	\subfigure[IMC]{
		\begin{minipage}[t]{0.23\textwidth}{
		\begin{center}		
		\includegraphics[width=\textwidth]{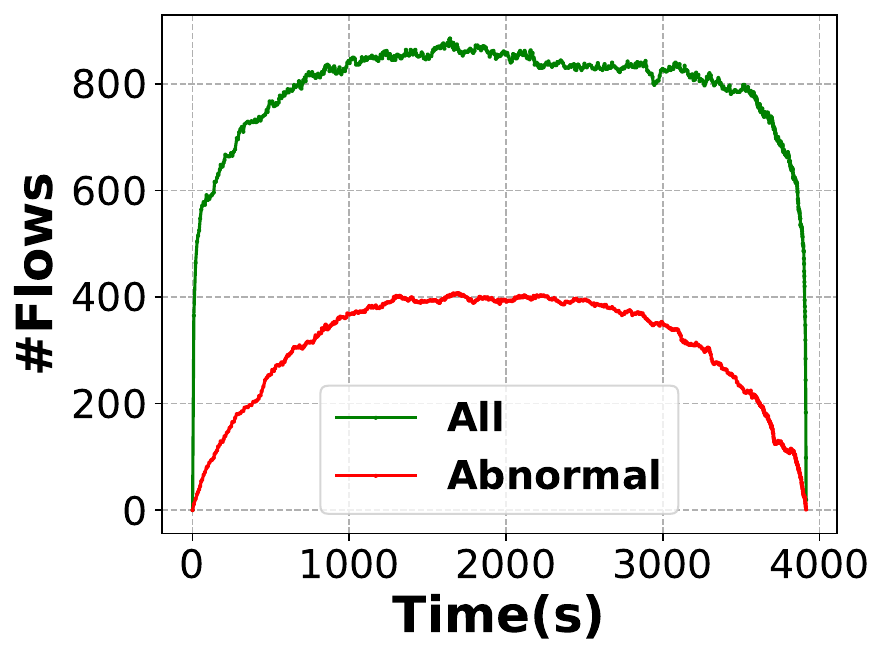}
		\end{center}
		}
		\label{concur-IMC}
		\end{minipage}
	}
    \caption{Concurrency circumstances of all/abnormal flows on different datasets.}
    \label{concur}
\end{figure*}
\begin{figure}[!ht]
	\centering
        \subfigure[Flow Length]{
		\begin{minipage}[t]{0.225\textwidth}{
		\begin{center}
		\includegraphics[width=\textwidth]{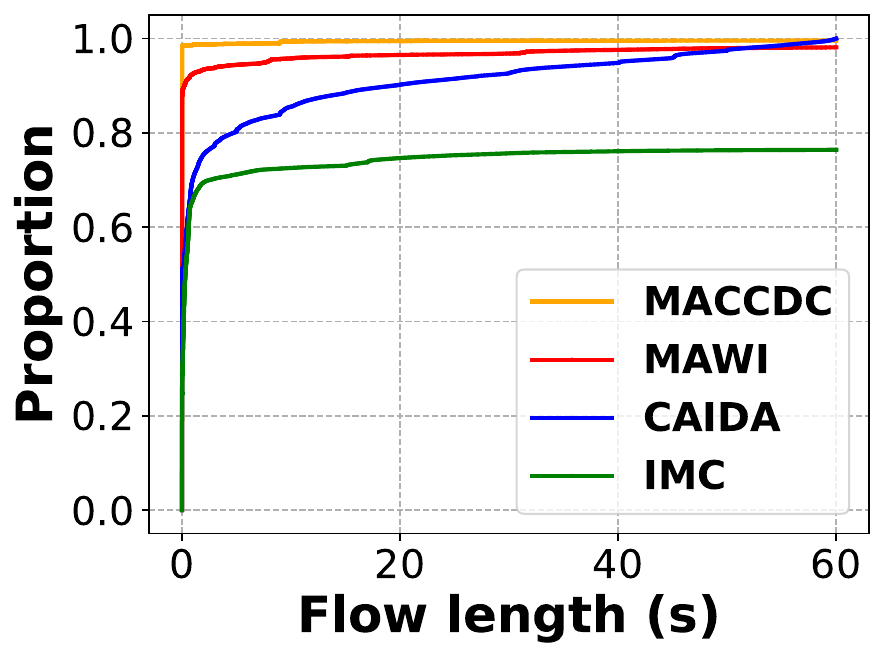}
		\end{center}
		}
		\label{fig:datasetDistri:length}
		\end{minipage}
	}
	\subfigure[\textit{Gap} Size]{
		\begin{minipage}[t]{0.225\textwidth}{
		\begin{center}
		\includegraphics[width=\textwidth]{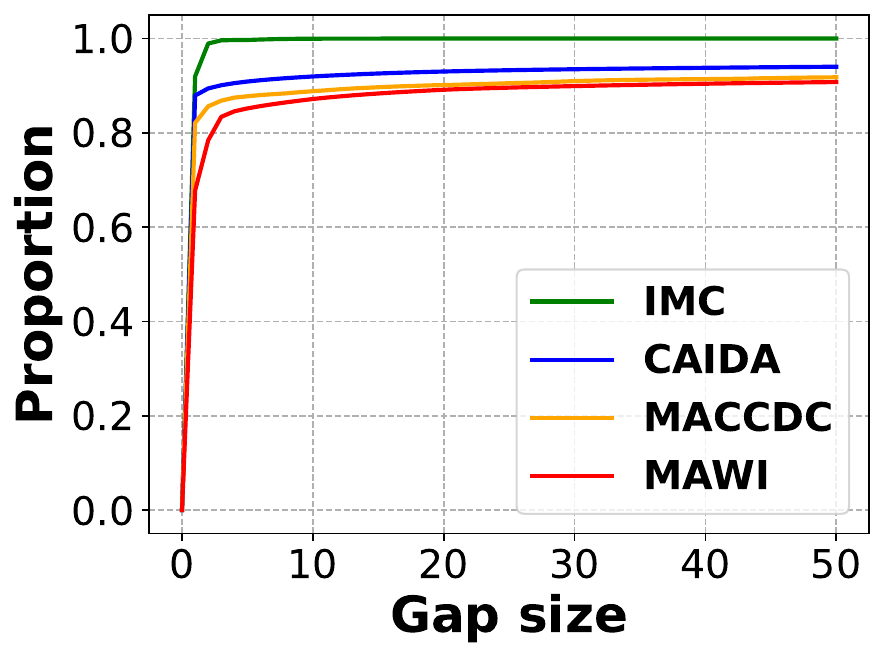}
		\end{center}
		}
		\label{fig:datasetDistri:gap}
		\end{minipage}
	}
 	
    \caption{Distributions of flow length and \textit{gap} size.}
    \label{fig:datasetDistri}
\end{figure}

\begin{sloppypar}

In this section, we evaluate the performance of \algo{} through experiments.
We implement \algo{} on the CPU platform, and evaluate it in the $flow\ gap$ problem.
The source codes are available at GitHub \cite{link}. 

\vspace{-0.15in}
\subsection{Experiment Setup}
\subsubsection{Datasets.} \label{exp:dataset} ~

We select four real-world datasets and perform experiments using the Identification field in the IP header as the sequence number. \reviewA{The concurrency circumstances of all/abnormal flows in these datasets are illustrated in Figure~\ref{concur}. The distributions of flow length and \textit{gap} size in these datasets are illustrated in Figure~\ref{fig:datasetDistri}. We set $\mathcal{T}_1=5,\mathcal{T}_2=30$.}

\begin{itemize}[leftmargin=*]
    \item \textbf{CAIDA:} We use an anonymized network trace dataset collected by CAIDA \cite{caida} in 2018. Each item in the dataset is distinguished by a 5-tuple (source IP address, source port, destination IP address, destination port, protocol) that uniquely identifies a UDP/TCP session. The slice used in this work contains network traffic in 1 min, which includes around 30M items and 1.3M flows.
    \item \textbf{MAWI:} The MAWI dataset contains real traffic trace data maintained by the MAWI Working Group\cite{mawi}. Each item ID in this dataset is also a 5-tuple, similar to the CAIDA dataset. There are around 55M items and 9M flows in the MAWI dataset.
    \item \textbf{MACCDC:} The MACCDC dataset, comprising of approximately 3M items and 2M flows, is provided by the U.S. National CyberWatch Mid-Atlantic Collegiate Cyber Defense Competition (MACCDC)\cite{maccdc}.
    \item \textbf{IMC:} The IMC dataset is sourced from one of the data centers studied in \emph{ Network Traffic Characteristics of Data Centers in the Wild}\cite{benson2010network}. Each item is identified by a 5-tuple. There are around 18M items in the IMC dataset, with 560K total flows.
\end{itemize}

\bbb{Synthetic item loss.}
To validate the effectiveness of \algo under extreme network conditions, we design a method of imposing synthetic item loss and apply it to our original datasets. In the experiments, the CAIDA, MAWI, and MACCDC datasets are used without synthetic item loss, whereas the IMC dataset is imposed with synthetic item loss. 

We divide our dataset into $n_T$ time windows of equal lengths and categorize flows within each window as normal or abnormal. For each time window, a certain portion $r\in(0,1)$ of the flows are randomly chosen and marked as abnormal, with the remaining flows marked as normal. We define two kinds of item loss in the data stream: (1) Consecutive item loss and (2) Single item loss. Consecutive item loss is implemented only on the abnormal flows. For any item $e_1= \langle FID_1,SEQ_1 \rangle $ in the abnormal flows, we generate a random number $j \in \{x \in \mathbb{N} | \mathcal{T}_1 \leq x < \mathcal{T}_2\}$ with equal probability. In the flow $FID_1$, we drop all the items with $SEQ \in \{x \in \mathbb{N} | {SEQ_1} \leq x < {SEQ_1}+j\}$ with a probability of $b^j$, where $b$ is a predefined constant. For a normal flow or an abnormal flow escaping the consecutive item loss, we execute the single item loss on it. For any item $e_2= \langle FID_2,SEQ_2 \rangle $ in the flows applicable to single item loss, we drop $e_{2}$ with a probability of a predefined number $p$. 

Consecutive item loss simulates the network congestion observed in real-world scenarios, where the buffer in a router or switch fills up and subsequent items arriving at this node are dropped. If such situation persists, consecutive item loss would occur in a flow. Single item loss represents items being lost during the transmission process due to weak or unstable signal conditions in the real world.

\vspace{0.3in}
\subsubsection{Implementation.}~
\label{exp:imple}

We implement our \algo{}-SO, \algo{}-AO and Straw-man solution in C++ on a CPU platform. The hash functions utilized are the 64-bit Bob Hash \cite{bobhash} initialized with different random seeds. \reviewB{Both \algo{}-SO and \algo{}-AO utilize sequence number randomizing. Only \algo{}-AO utilizes fingerprint.}



\vspace{0.7in}
\subsubsection{Metrics.} \label{exp:setup:metrics}

\begin{itemize}[leftmargin=*]
    \item \textbf{Precision Rate (PR):} Precision rate is the ratio of the number of correctly reported instances to the number of reported instances.
    \item \textbf{Recall Rate (RR):} Recall rate is the ratio of the number of correctly reported instances to the number of the correct instances.
    \item \textbf{$F_1$ Score:} $\frac{2 \times PR \times RR}{PR+RR}$
\end{itemize}

\vspace{0.3in}
\subsubsection{Algorithm Comparison}~
\label{strawman}


We devised a Straw-man solution based on Cuckoo Filter \cite{fan2014cuckoo}. Cuckoo Filter is an efficient hash table implementation based on cuckoo hashing \cite{pagh2004cuckoo}, which can achieve both high utilization and compactness. It realizes constant time lookup and amortization constant time insertion operations. Straw-man also uses fingerprint instead of recording flow-id to improve space utilization.

\reviewA{
Specifically, the data structure consists of a table of buckets and three hash functions $h_1(\cdot), h_2(\cdot)$, and $h_f(\cdot)$. Every bucket contains $w$ cells, each recording a fingerprint $fp^c$ and a sequence number $SEQ^c$. Due to the fact that $w$ in the original design of Cuckoo Filter must be a power of two, we improve it by dividing buckets into two blocks to accommodate various memory sizes: For each incoming item $e = \langle FID, SEQ \rangle $, we first calculate its $fp= h_f(FID)$ and map it into the $[h_1(fp)]^{th}$ bucket in block one and the $[h_2(fp)]^{th}$ bucket in block two. Then we search for a cell with its $fp^c=fp$ in the two buckets. With the matched cell found, we calculate $dif=SEQ-SEQ^c$, use equation (\ref{situation}) to determine the situation, and update the $SEQ^c$ to be $max\{SEQ,SEQ^c\}$. If such cells do not exist, we will insert $e$ to a cell by setting its $FID^c=FID$ and $fp^c=fp$ when a bucket still contains empty cells. If all the buckets are full, we will randomly evict an item in a cell, and place it into the other bucket where it can go. If that bucket is also full, then another eviction will be triggered. This process goes on and on until an empty cell exists or a predefined \textit{MAX\_NUMBER\_OF\_TURNS} is reached. In these experiments, we set $w=4$ and length of $fp$ to be $32bits$ to avoid collisions. \textit{MAX\_NUMBER\_OF\_TURNS} is set to $8$ because it can already achieve a high utility rate of memory. Setting a larger \textit{MAX\_NUMBER\_OF\_TURNS} will not improve the accuracy much but will lower the throughput.
}


%



\vspace{-0.1in}
\subsection{Experiments on Parameter Settings}
\label{art:exp:para}


\begin{figure}[!ht]
	\centering
        \subfigure[CAIDA]{
		\begin{minipage}[t]{0.225\textwidth}{
		\begin{center}
		\includegraphics[width=\textwidth]{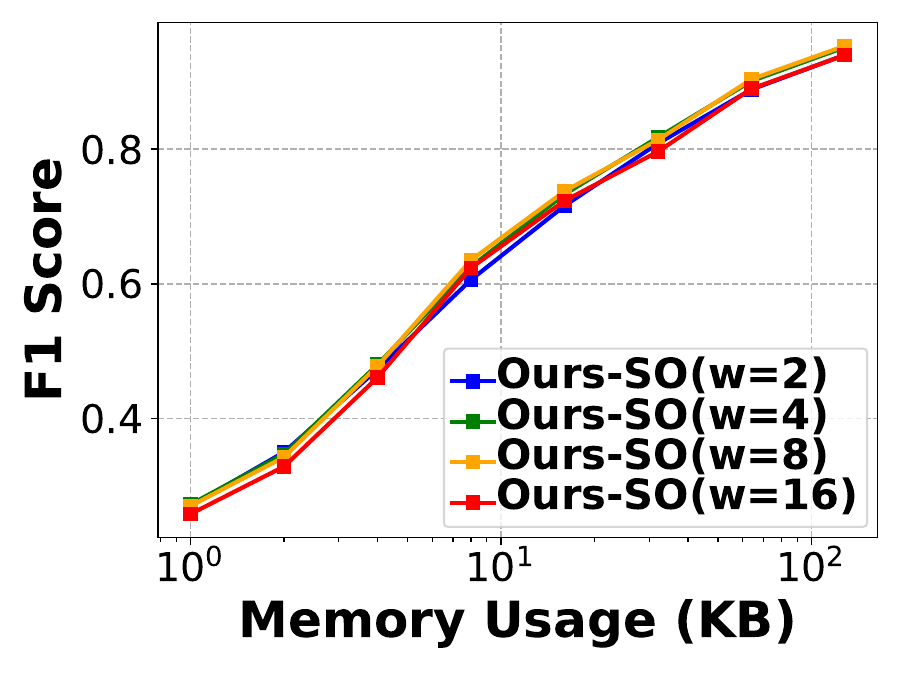}
		\end{center}
		}
		\label{para-SOwidth-CAIDA-F1}
		\end{minipage}
	}
	\subfigure[MAWI]{
		\begin{minipage}[t]{0.225\textwidth}{
		\begin{center}
		\includegraphics[width=\textwidth]{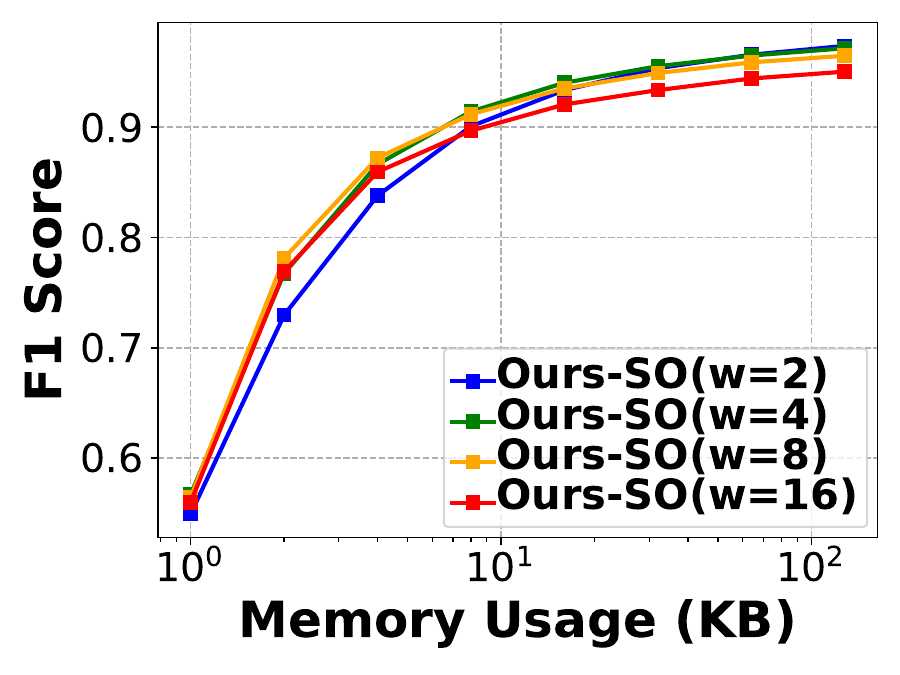}
		\end{center}
		}
		\label{para-SOwidth-MAWI-F1}
		\end{minipage}
	}
	%
    \caption{Effect of $w$ on \algo{}-SO on different datasets.}
    \label{para-SOwidth}
\end{figure}

\begin{figure}[!ht]
	\centering
        \subfigure[CAIDA]{
		\begin{minipage}[t]{0.225\textwidth}{
		\begin{center}
		\includegraphics[width=\textwidth]{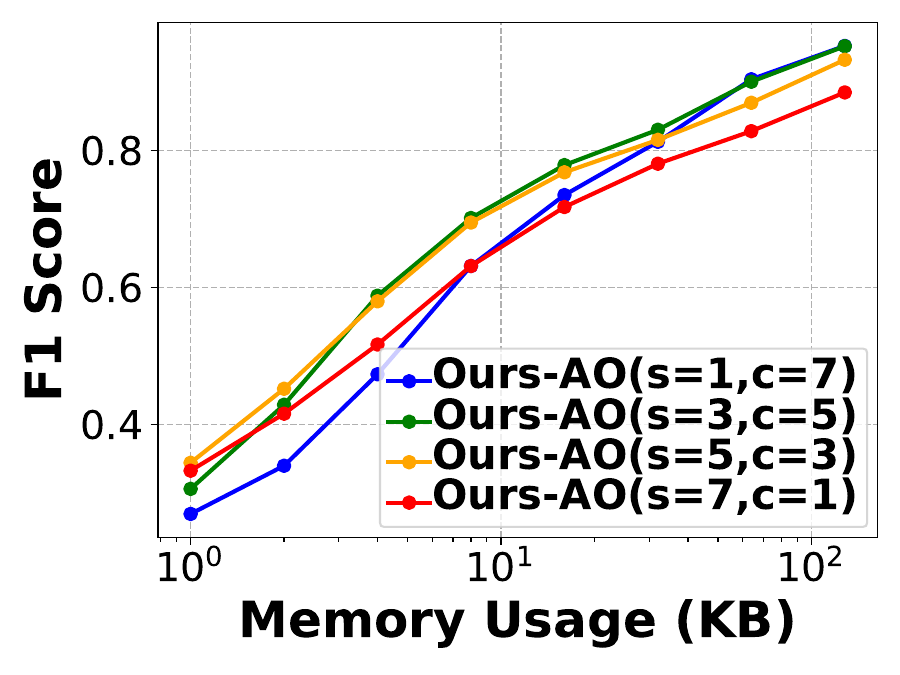}
		\end{center}
		}
		\label{para-AOratio-CAIDA_F1}
		\end{minipage}
	}
	\subfigure[MAWI]{
		\begin{minipage}[t]{0.225\textwidth}{
		\begin{center}
		\includegraphics[width=\textwidth]{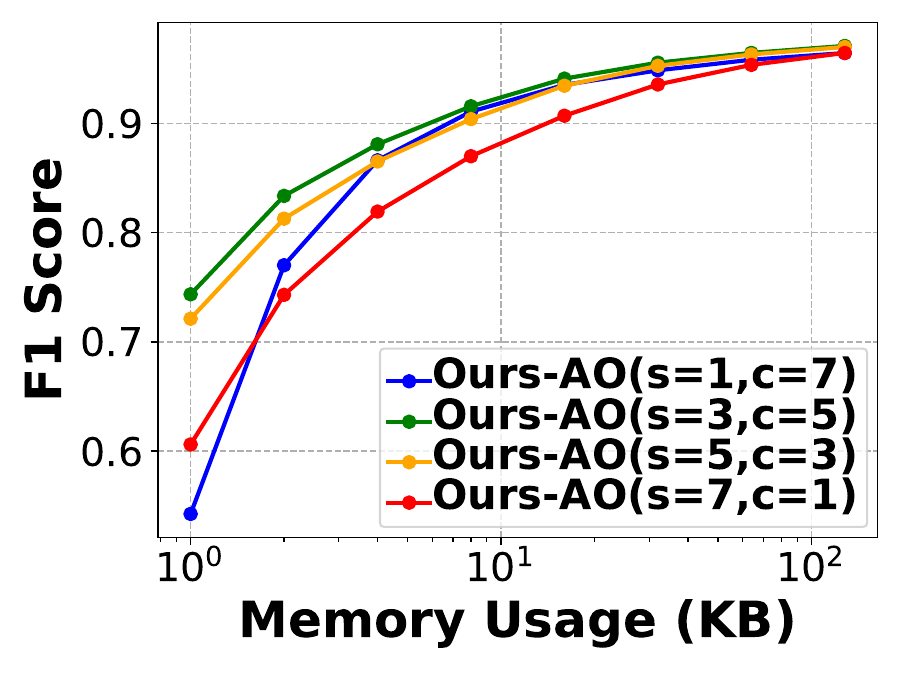}
		\end{center}
		}
		\label{para-AOratio-MAWI_F1}
		\end{minipage}
	}
	%
    \caption{Effect of \textit{s/c} on \algo{}-AO on different datasets.}
    \label{para-AOratio}
\end{figure}
\begin{figure}[!ht]
	\centering
        \subfigure[MACCDC]{
		\begin{minipage}[t]{0.225\textwidth}{
		\begin{center}
		\includegraphics[width=\textwidth]{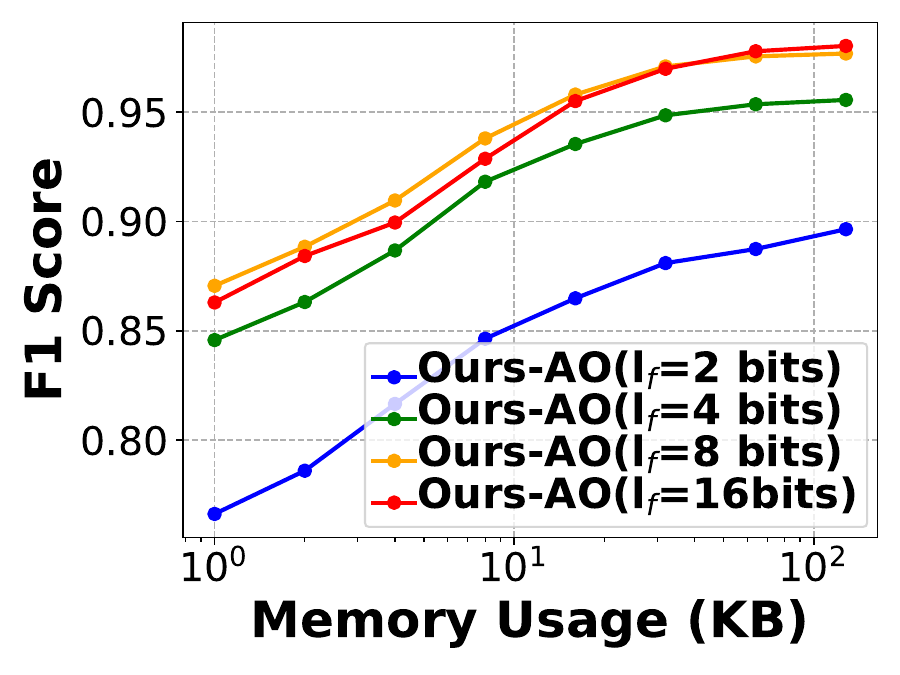}
		\end{center}
		}
		\label{para-AOfp-MACCDC-F1}
		\end{minipage}
	}
	\subfigure[IMC]{
		\begin{minipage}[t]{0.225\textwidth}{
		\begin{center}
		\includegraphics[width=\textwidth]{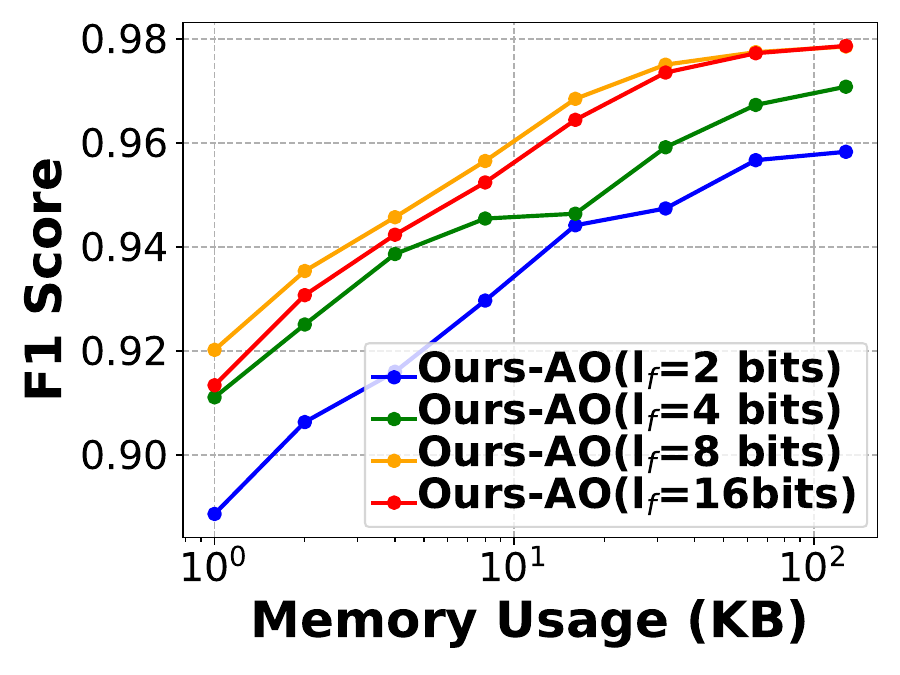}
		\end{center}
		}
		\label{para-AOfp-IMC-F1}
		\end{minipage}
	}
	%
    \caption{Effect of $l_f$ on \algo{}-AO on different datasets.}
    \label{para-AOfp}
\end{figure}
\begin{figure*}[!ht]
	\centering
        \subfigure[CAIDA]{
		\begin{minipage}[t]{0.23\textwidth}{
		\begin{center}
		\includegraphics[width=\textwidth]{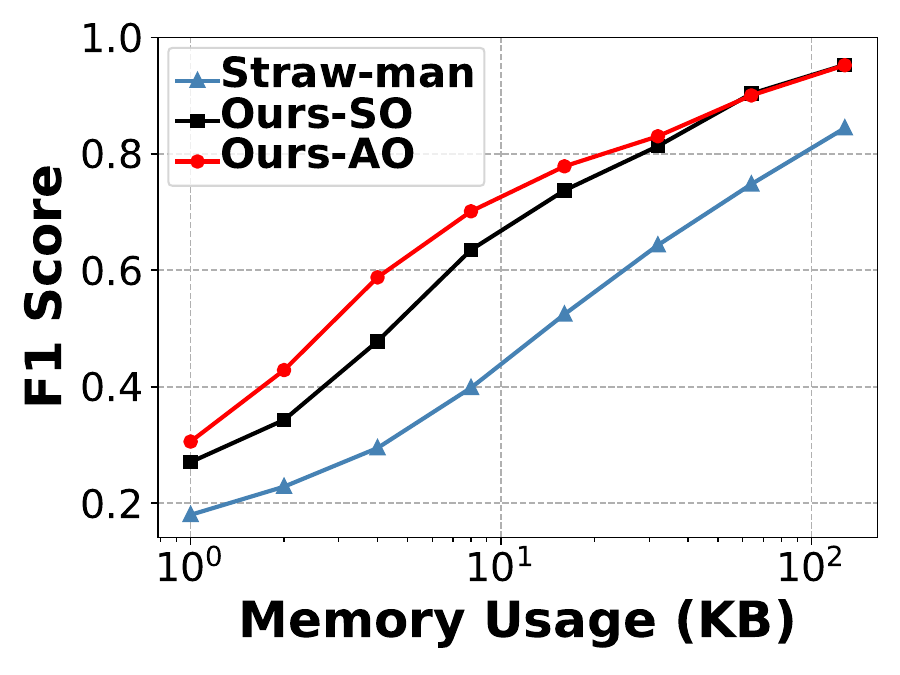}
		\end{center}
		}
		\label{real-F1-CAIDA}
		\end{minipage}
	}
	\subfigure[MAWI]{
		\begin{minipage}[t]{0.23\textwidth}{
		\begin{center}
		\includegraphics[width=\textwidth]{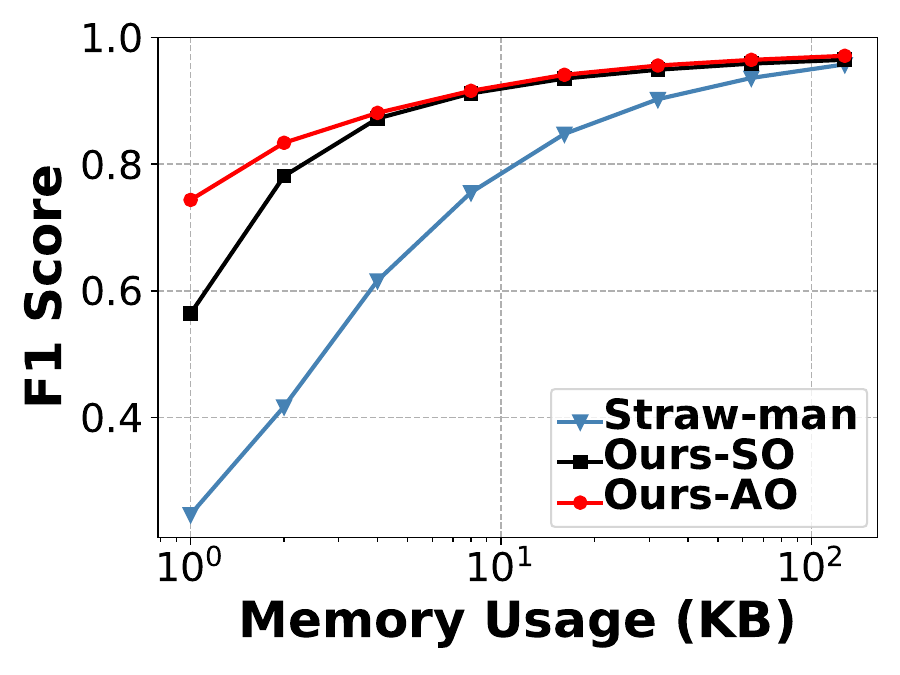}
		\end{center}
		}
		\label{real-F1-MAWI}
		\end{minipage}
	}
	\subfigure[MACCDC]{
		\begin{minipage}[t]{0.23\textwidth}{
		\begin{center}		
		\includegraphics[width=\textwidth]{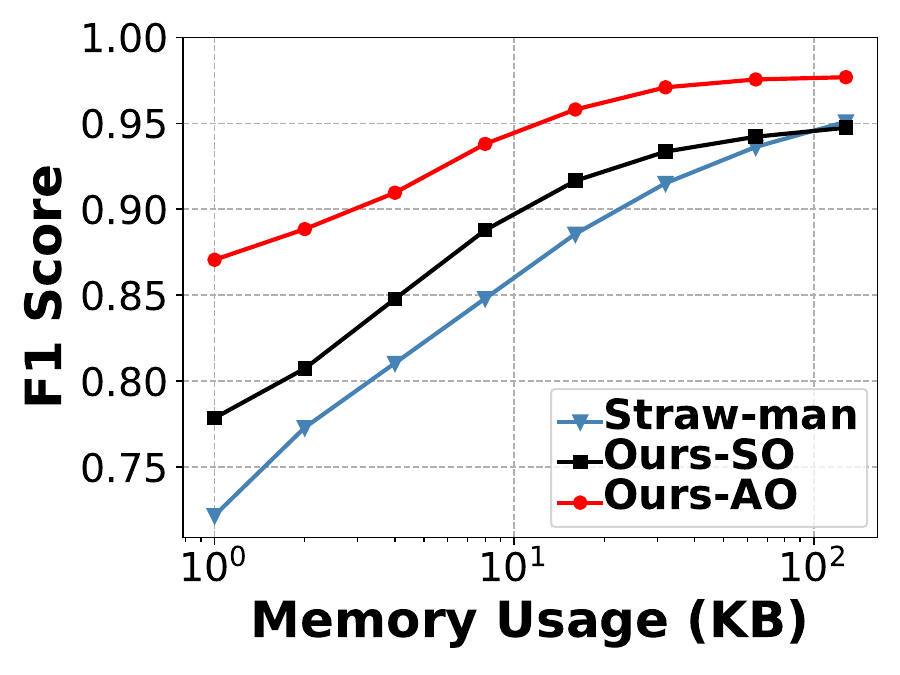}
		\end{center}
		}
		\label{real-F1-MACCDC}
		\end{minipage}
	}
	\subfigure[IMC]{
		\begin{minipage}[t]{0.23\textwidth}{
		\begin{center}		
		\includegraphics[width=\textwidth]{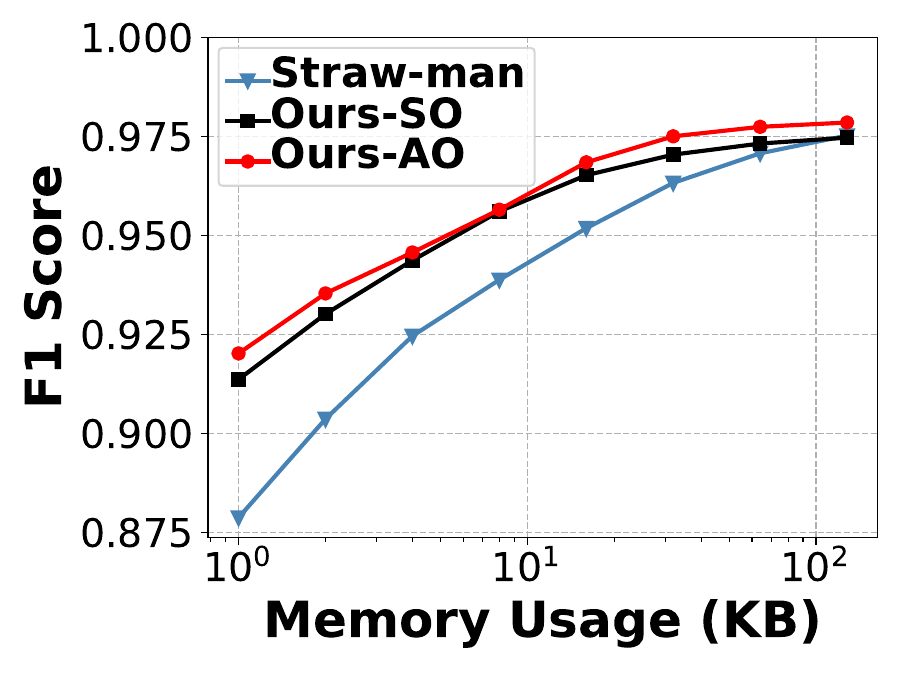}
		\end{center}
		}
		\label{real-F1-IMC}
		\end{minipage}
	}
    \caption{$F_1$-Score of algorithms on different datasets.}
    \label{real-F1}
\end{figure*}
\begin{figure*}[!ht]
	\centering
        \subfigure[CAIDA]{
		\begin{minipage}[t]{0.225\textwidth}{
		\begin{center}
		\includegraphics[width=\textwidth]{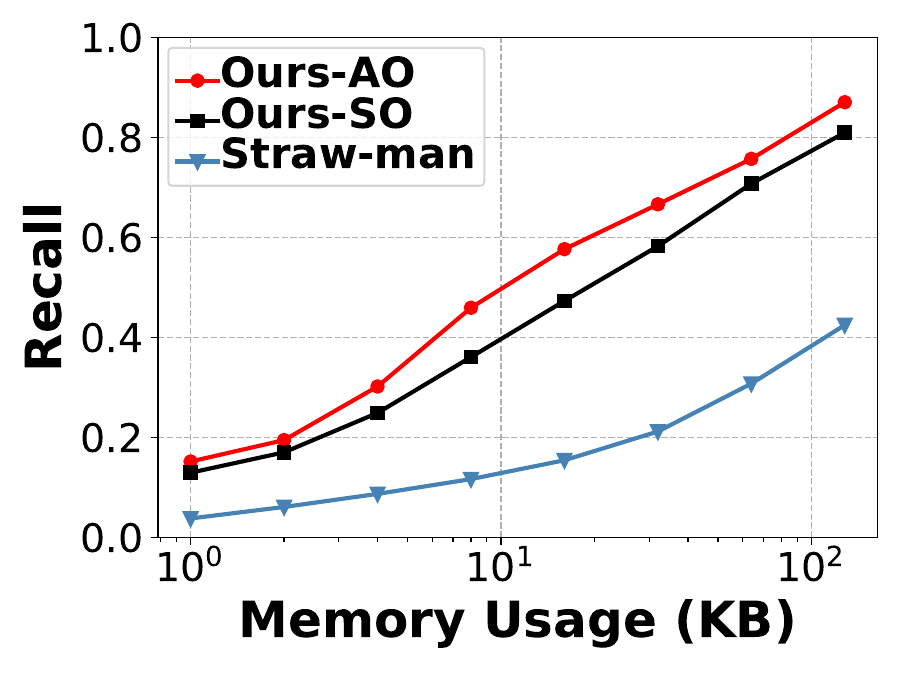}
		\end{center}
		}
		\label{real-RR-CAIDA}
		\end{minipage}
	}
	\subfigure[MAWI]{
		\begin{minipage}[t]{0.225\textwidth}{
		\begin{center}
		\includegraphics[width=\textwidth]{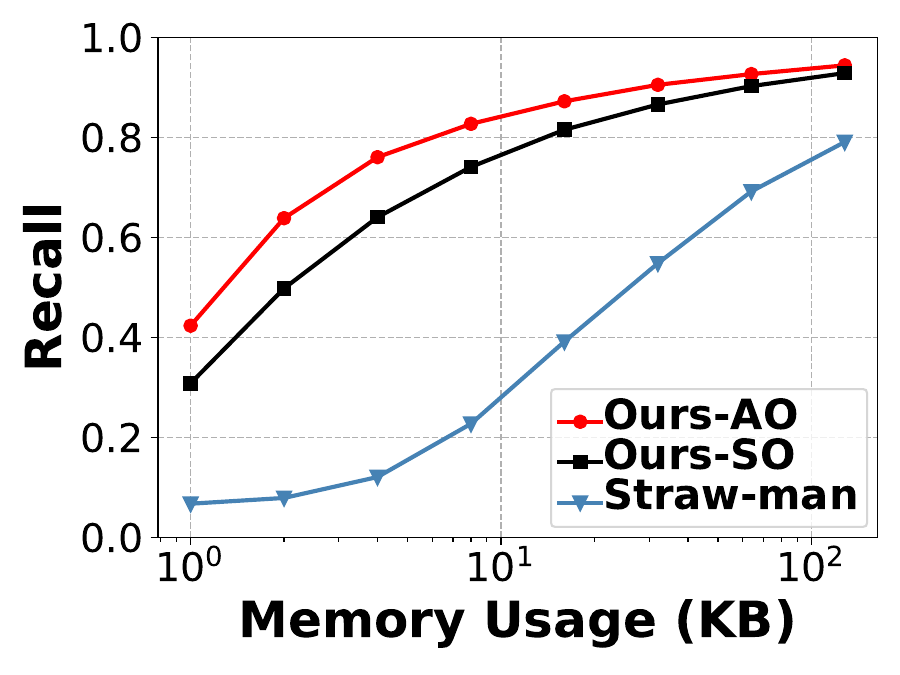}
		\end{center}
		}
		\label{real-RR-MAWI}
		\end{minipage}
	}
	\subfigure[MACCDC]{
		\begin{minipage}[t]{0.225\textwidth}{
		\begin{center}		
		\includegraphics[width=\textwidth]{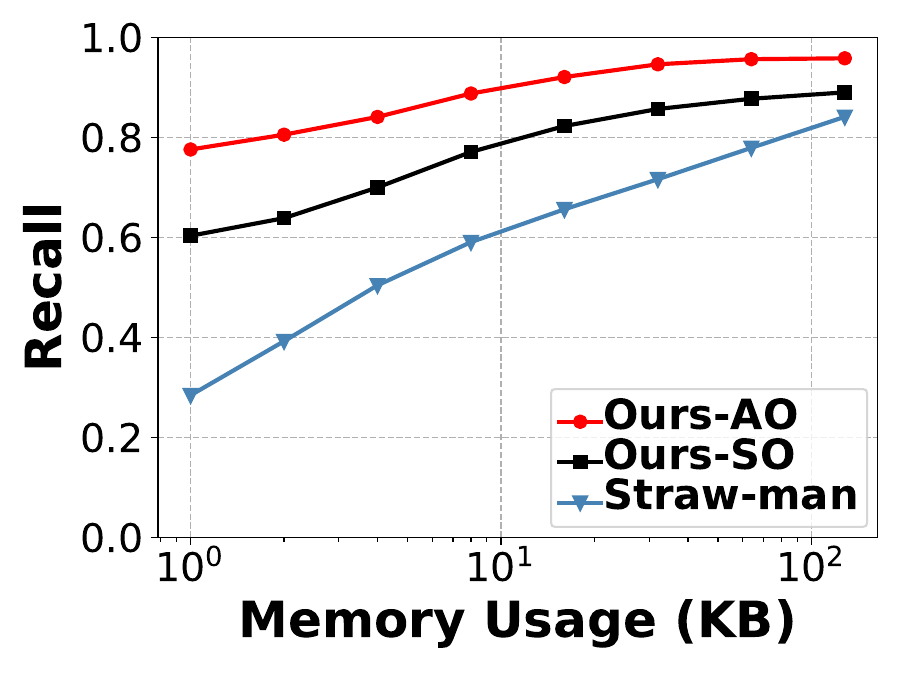}
		\end{center}
		}
		\label{real-RR-MACCDC}
		\end{minipage}
	}
	\subfigure[IMC]{
		\begin{minipage}[t]{0.225\textwidth}{
		\begin{center}		
		\includegraphics[width=\textwidth]{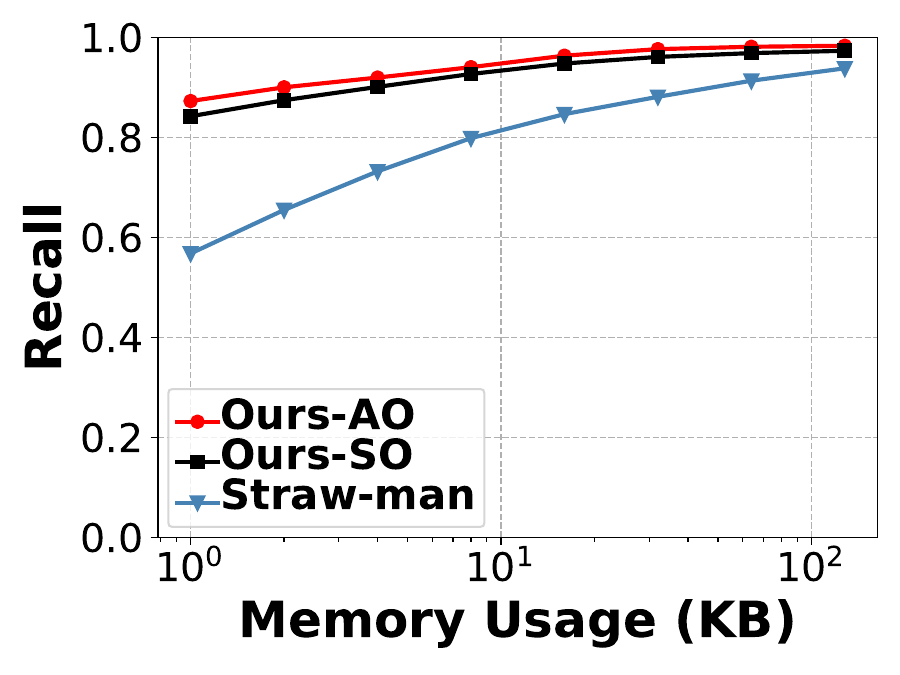}
		\end{center}
		}
		\label{real-RR-IMC}
		\end{minipage}
	}
    \caption{RR of algorithms in Real-time Report Scenario on different datasets.}
    \label{real-RR}
\end{figure*}
\begin{figure*}[!ht]
	\centering
        \subfigure[CAIDA]{
		\begin{minipage}[t]{0.225\textwidth}{
		\begin{center}
		\includegraphics[width=\textwidth]{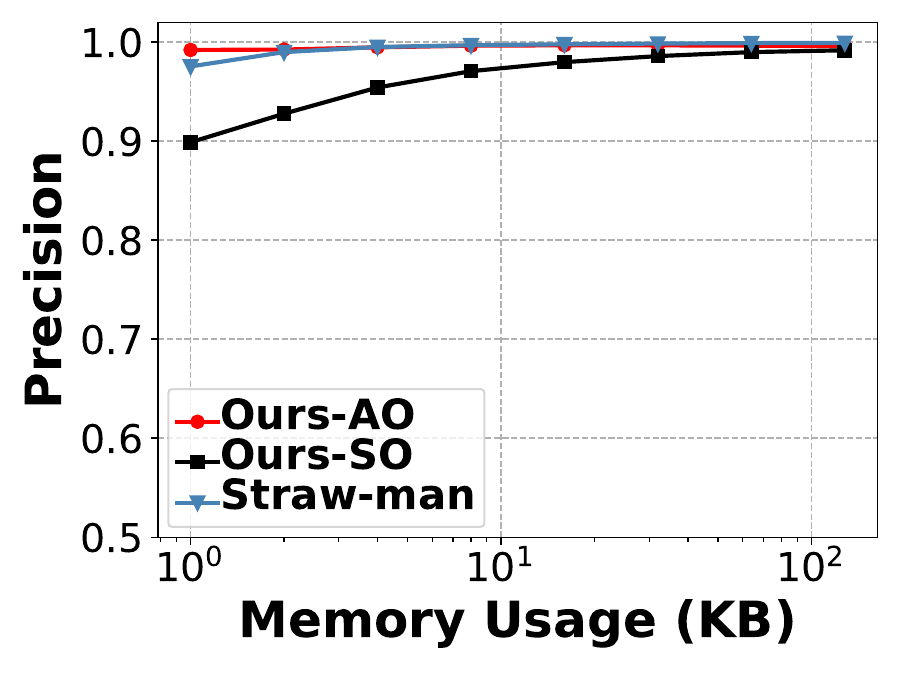}
		\end{center}
		}
		\label{real-PR-CAIDA}
		\end{minipage}
	}
	\subfigure[MAWI]{
		\begin{minipage}[t]{0.225\textwidth}{
		\begin{center}
		\includegraphics[width=\textwidth]{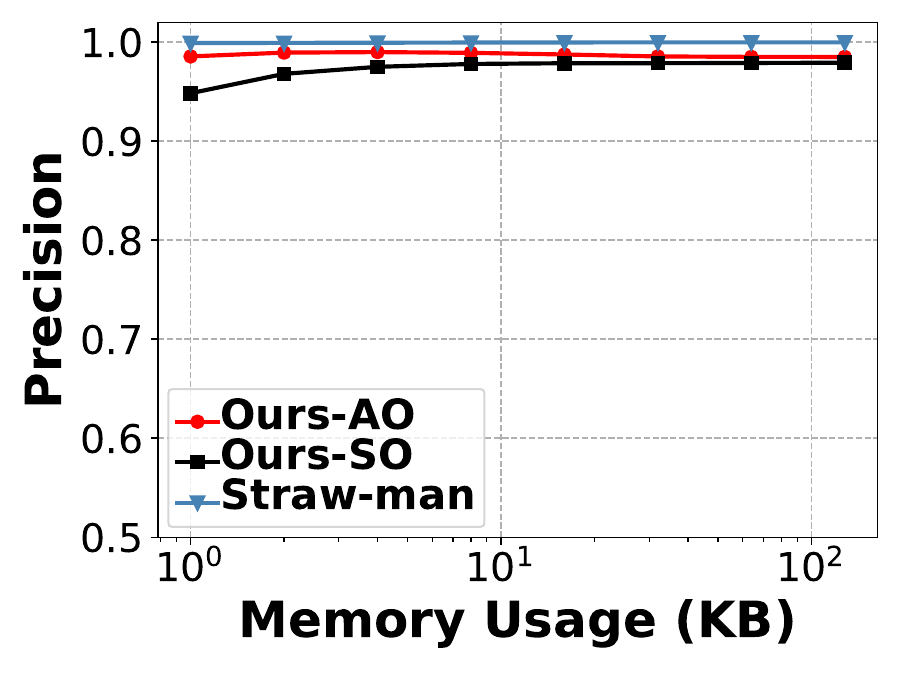}
		\end{center}
		}
		\label{real-PR-MAWI}
		\end{minipage}
	}
	\subfigure[MACCDC]{
		\begin{minipage}[t]{0.225\textwidth}{
		\begin{center}		
		\includegraphics[width=\textwidth]{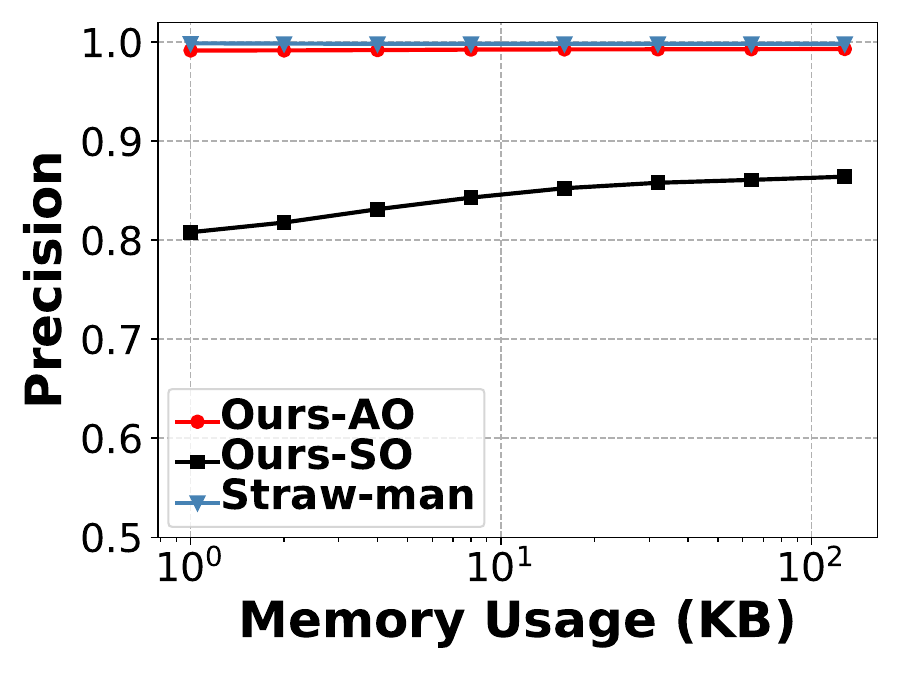}
		\end{center}
		}
		\label{real-PR-MACCDC}
		\end{minipage}
	}
	\subfigure[IMC]{
		\begin{minipage}[t]{0.225\textwidth}{
		\begin{center}
		\includegraphics[width=\textwidth]{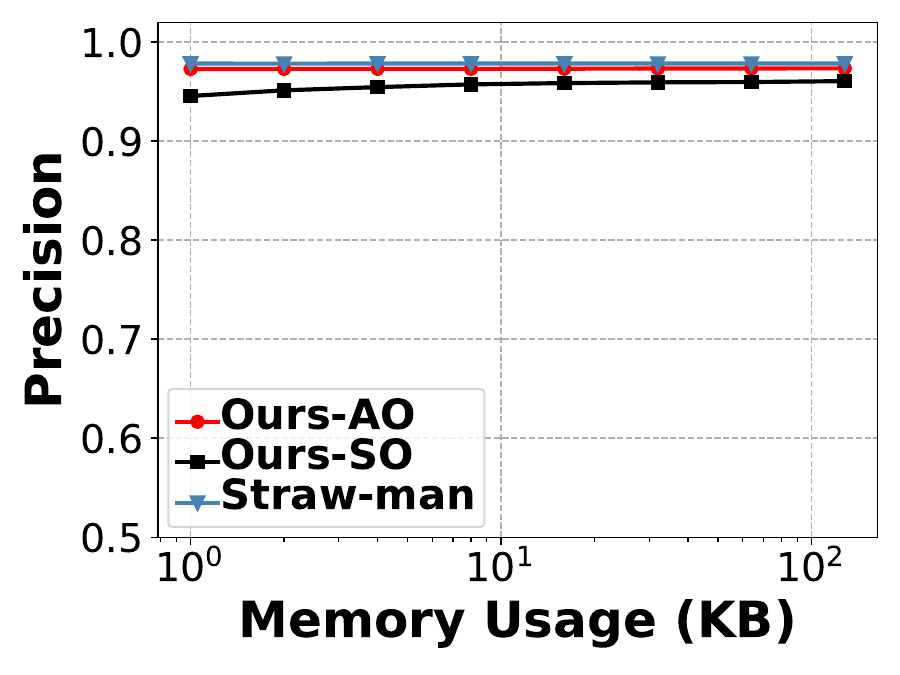}
		\end{center}
		}
		\label{real-PR-IMC}
		\end{minipage}
	}
    \caption{PR of algorithms in Real-time Report Scenario on different datasets.}
    \label{real-PR}
\end{figure*}

\vspace{0.1in}
\bbb{Effect of \textit{w} on \algo{}-SO (Figure~\ref{para-SOwidth}).} We perform experiments with $w$ ranging from 2 to 16 in \algo{}-SO and observe their performance on various datasets. As shown in the graph, the accuracy of \algo{}-SO first increases and then decreases as $w$ enlarges. 
The \algo{}-SO with $w=8$ performs best.
This is attributable to the two counteracting effects on accuracy as $w$ gets larger: the risk of different $SEQ$ colliding in a bucket increases, while the probability that too many large flows crush in a bucket decreases.




\bbb{Effect of \textit{s/c} on \algo{}-AO (Figure~\ref{para-AOratio}).} We perform experiments with the $suspect/civilian$ memory ratio ranging from $1:7$ to $7:1$ in \algo{}-AO and observe their performance on various datasets. As shown in the graph, the accuracy of \algo{}-AO first rises and then falls when $suspect/civilian$ memory ratio becomes larger. The \algo{}-AO with ratio=$3:5$ performs best.
As we analyze in ~\ref{AO:analysis}, the $suspect$ part is designed to protect the small abnormal flows from being ousted by large normal flows. However, when the $suspect$ part consumes too much memory, we lack sufficient $civilian$ cells to monitor overall flows, leading to a loss of much necessary information.

\bbb{Effect of fingerprint length on \algo{}-AO (Figure~\ref{para-AOfp}).} We conduct experiments with fingerprint lengths ($l_f$) ranging from 2 bits to 16 bits in \algo{}-AO and observe their performance on different datasets. As shown in the graph, the accuracy of \algo{}-AO first increases and then decreases when $l_f$ enlarges. The \algo{}-AO with $l_f=8$ bits performs the best.
This is because on one hand, increasing $l_f$ improves the accuracy of matching(see Section ~\ref{OP:FP} for more details) by reducing mistakes caused by $seq$ collisions. On the other hand, a fingerprint with a larger $l_f$ will occupy more memory, decreasing the total number of cells.  

\bbb{Parameter Selection.} In experiments from here on, we set $w=8$,
$l_f=8$ 
and $suspect/civilian$ memory ratio to be $3:5$ because they are the most robust setting considering performances at different datasets. 




\begin{figure*}[!ht]
	\centering
        \subfigure[CAIDA]{
		\begin{minipage}[t]{0.235\textwidth}{
		\begin{center}
		\includegraphics[width=\textwidth]{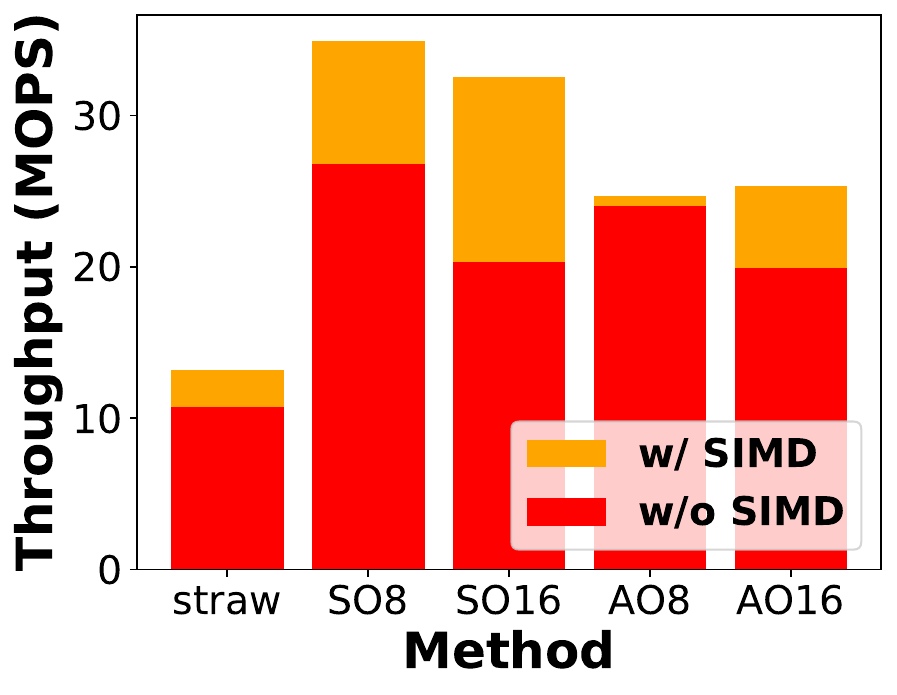}
		\end{center}
		}
		\label{throu-CAIDA}
		\end{minipage}
	}
	\subfigure[MAWI]{
		\begin{minipage}[t]{0.235\textwidth}{
		\begin{center}
		\includegraphics[width=\textwidth]{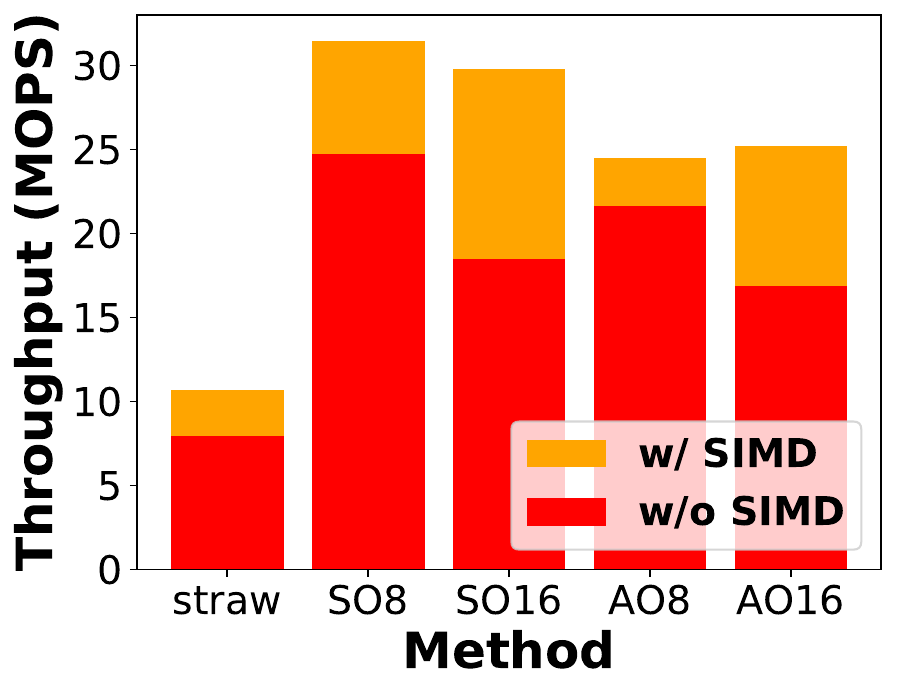}
		\end{center}
		}
		\label{throu-MAWI}
		\end{minipage}
	}
	\subfigure[MACCDC]{
		\begin{minipage}[t]{0.235\textwidth}{
		\begin{center}		
		\includegraphics[width=\textwidth]{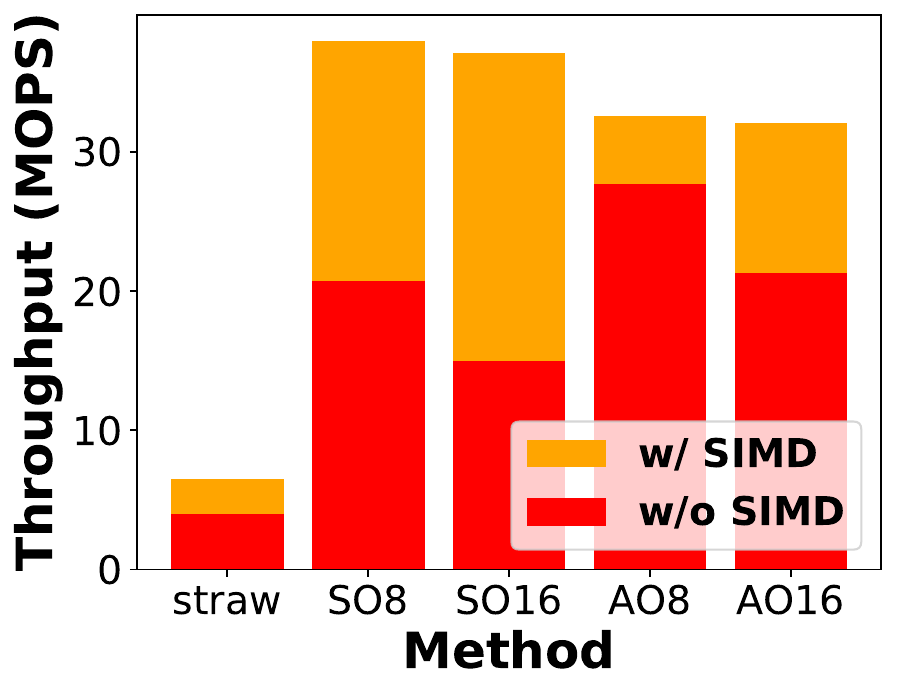}
		\end{center}
		}
		\label{throu-MACCDC}
		\end{minipage}
	}
	\subfigure[IMC]{
		\begin{minipage}[t]{0.235\textwidth}{
		\begin{center}		
		\includegraphics[width=\textwidth]{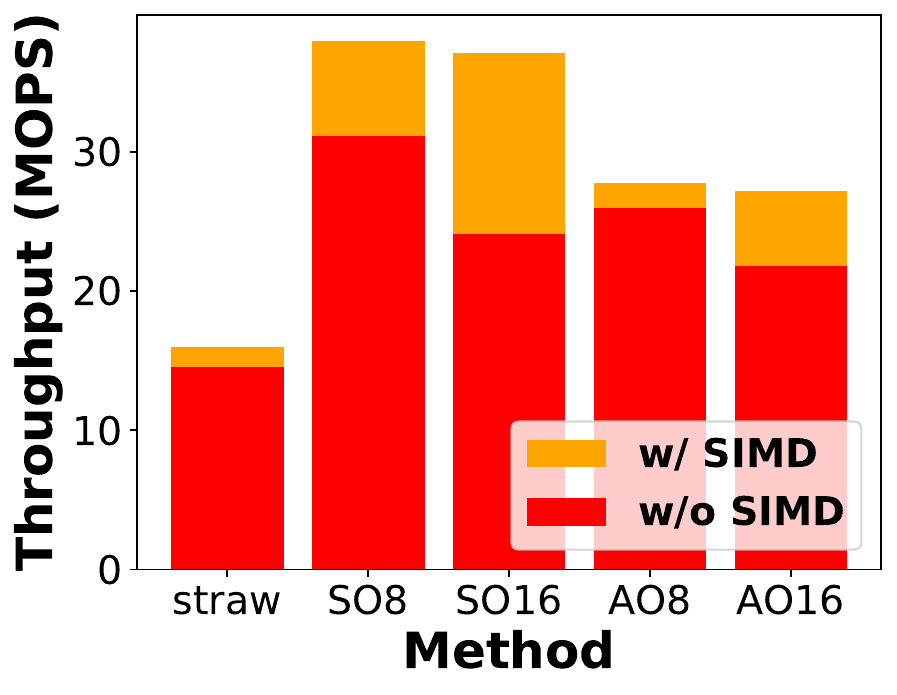}
		\end{center}
		}
		\label{throu-IMC}
		\end{minipage}
	}
    \caption{Throughput of algorithms on different datasets.}
    \label{throu}
\end{figure*}

\vspace{-0.1in}
\subsection{Experiments on Accuracy}
\label{exp:acc}

\reviewA{
In this section, we compare the accuracy of the Straw-man solution, \algo{}-SO and \algo{}-AO. We analyze the changes in $F_1$ as we alter the memory allocation from 1KB to 128KB.
}

\reviewA{
\bbb{$F_1$-Score (Figure~\ref{real-F1}).} The experiments shows that the $F_1$ of \algo{}-AO and \algo{}-SO are constantly higher than that of the Straw-man solution. We conduct experiments on different datasets and find that on CAIDA, MAWI, MACCDC, and IMC, the $F_1$-Scores of \algo{}-AO are respectively 1.61, 1.50, 1.12, and 1.04 times higher on average than those of the Straw-man solution, while the $F_1$-Scores of \algo{}-SO are on respectively 1.40, 1.38, 1.04, and 1.02 times higher on average than those of the Straw-man solution.
}

\bbb{RR (Figure~\ref{real-RR}).} According to the experiment results, the RR of \algo{}-AO and \algo{}-SO are significantly higher than that of the Straw-man solution. We conduct experiments on different datasets and find that on CAIDA, MAWI, MACCDC, and IMC, the RR of \algo{}-AO are respectively 3.26, 3.85, 1.63, and 1.22 times higher on average than those of Straw-man solution, while the RR of \algo{}-SO are respectively 2.76, 3.20, 1.38, and 1.19 times higher on average than those of the Straw-man solution.

\bbb{PR (Figure~\ref{real-PR}).} We can see from the experiment results that \algo{}-AO and Straw-man possess a similar precision rate with any memory limitation. The precision rate of \algo{}-SO increases rapidly with memory increasing. 
The precision rate of the Straw-man solution is always 1 since it records the flow ID.

\bbb{Analysis.} It is shown in the experiment results above that \algo{}-AO and \algo{}-SO can achieve a much better accuracy than the Straw-man solution. Notably, the Straw-man solution requires about 32 times more memory than \algo{}-AO to achieve the same accuracy. Several key factors contribute to the excellence of our algorithms. First, we utilize the sequence number as the index number to achieve matching, saving the memory of the 13-byte flow ID. This approach poses the challenge of avoiding $seq$ collisions in the huge number of flows within the data stream. We effectively address this issue by grouping and using the fingerprint as assistance. Second, our $civilian$-$suspect$ mechanism efficiently combines a rough monitoring on the overall flows and a meticulous monitoring on the suspicious flows organically, achieving an extraordinary performance. Third, we employ the LRU and LRD replacing policies to keep the most critical information. Furthermore, LRU and LRD are implemented without extra memory or time overhead.

\vspace{-0.1in}
\subsection{Experiments on Processing Speed}
\label{exp:speed}

\reviewA{
We compare the throughput of \algo{}-SO with $w=8/16$, \algo{}-AO with $w=8/16$, and Straw-man. 
}

\reviewA{
\bbb{Throughput (Figure~\ref{throu}).} As the results manifest, in every dataset, the fastest algorithm is \algo{}-SO. Both \algo{}-SO and \algo{}-AO are significantly faster than the Straw-man solution, with or without SIMD. SIMD brings more improvement when the number of cells in a bucket is larger.
On dataset CAIDA, MAWI,  MACCDC and IMC, the 
throughput of \algo{}-SO is respectively 2.57, 3.03, 5.19 and 2.18 times higher on average than those of the Straw-man solution, while the throughput of \algo{}-AO is respectively 2.10, 2.51, 5.60 and 1.69 times higher than the Straw-man solution.
the fastest algorithm is \algo{}-SO, being 1.6 times faster than the Straw-man solution. The speed of \algo{}-AO and the Straw-man solution are similar. 
}



\bbb{Analysis.} 
\algo{}-SO is significantly faster than Straw-man due to several key advantages: {\textbf{(1)} Our operations are neat and has good spatial locality. We only need to calculate at most three hash numbers for every incoming item. Once a bucket is chosen, all the following operations are conducted in the bucket. We also keep the size of a group small, so the memory taken by a bucket is small. In contrast, there are at least three hashing calculations for every incoming item in the Straw-man solution. Besides, the eviction operation in the Straw-man solution has a bad spatial locality.} \reviewB{ \textbf{(2)} Straw-man needs to access the fingerprint for matching and access the sequence number for detecting flow gaps. While \algo{}-SO only needs to access the sequence number to accomplish both matching and detecting flow gaps.} 

\begin{figure}[!ht]
	\centering
        \subfigure[fingerprint]{
		\begin{minipage}[t]{0.225\textwidth}{
		\begin{center}
		\includegraphics[width=\textwidth]{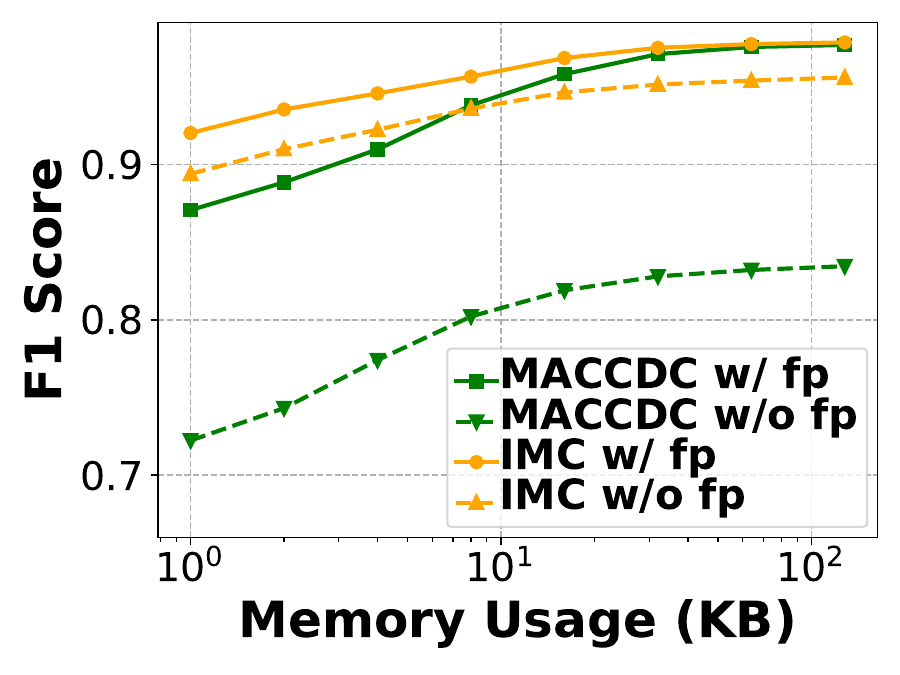}
		\end{center}
		}
		\label{fig:optim:fingerprint}
		\end{minipage}
	}
        \subfigure[sequence number randomizing]{
		\begin{minipage}[t]{0.225\textwidth}{
		\begin{center}
		\includegraphics[width=\textwidth]{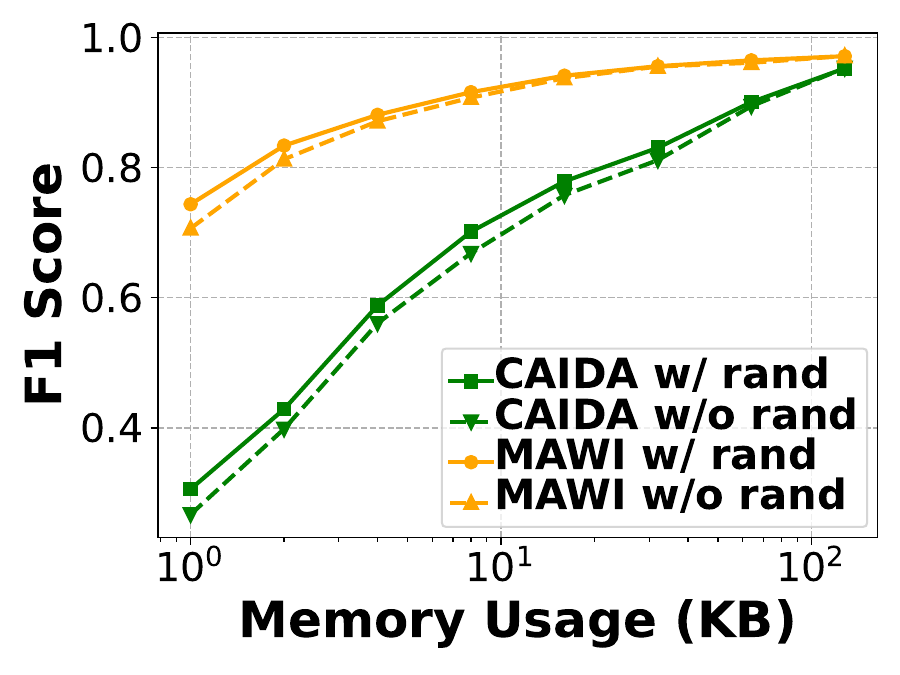}
		\end{center}
		}
		\label{fig:optim:rand}
		\end{minipage}
	}

    \caption{Effect of optimizations on different datasets.}
    \label{fig:optim}
\end{figure}

\subsection{Experiments on optimizations}
\label{exp:opt}
In this section, we show the improvement of accuracy provided by sequence number randomizing and fingerprint.

{
\bbb{Effect of fingerprint (Figure~\ref{fig:optim:fingerprint}).} The experiment results show that fingerprint significantly enhances the accuracy on dataset MACCDC and IMC. Specifically, \algo{}-AO with fingerprint performs 1.18 times better than that without fingerprint.
}

{
\bbb{Effect of sequence number randomizing (Figure~\ref{fig:optim:rand}).} The experiment results show that sequence number randomizing improves the accuracy on dataset CAIDA and MAWI. Specifically, \algo{}-AO with sequence number randomizing performs 1.04 times better than that without sequence number randomizing.
}

\subsection{Experiments on Pattern of Flow Gaps}
In this section, we alter the ratio $r$ of abnormal flows in a time window and the parameter $b$ determining the probability of \textit{major gap} taking place in an abnormal flow in the synthetic item loss, as described in Section~\ref{exp:dataset}, in order to test the robustness of \algo{}. 

\begin{figure}[!ht]
	\centering
        \subfigure[The effect of $r$]{
		\begin{minipage}[t]{0.225\textwidth}{
		\begin{center}
		\includegraphics[width=\textwidth]{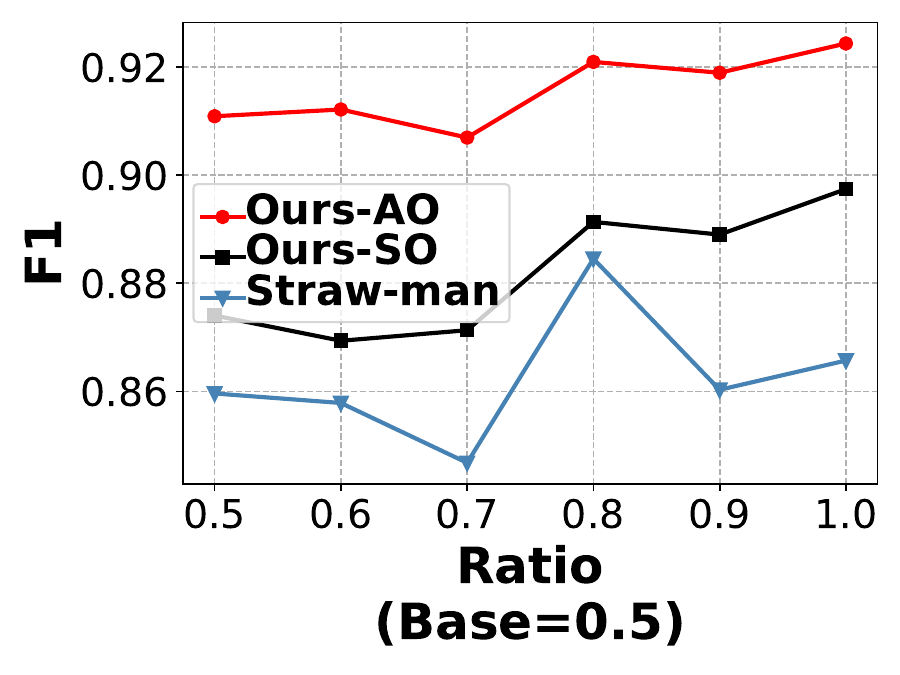}
		\end{center}
		}
		\label{pat-ratio}
		\end{minipage}
	}
	\subfigure[The effect of $b$]{
		\begin{minipage}[t]{0.225\textwidth}{
		\begin{center}
		\includegraphics[width=\textwidth]{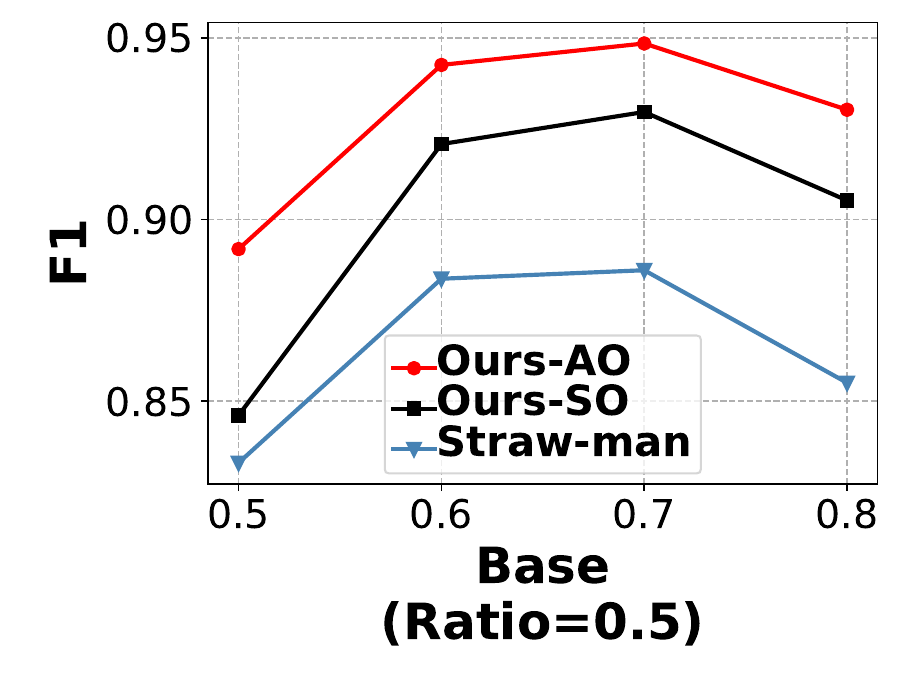}
		\end{center}
		}
		\label{pat-base}
		\end{minipage}
	}
	%
    \caption{The effect of $r$ and $b$
    on IMC dataset.}
    \label{pat}
\end{figure}

\bbb{Effect of $r$ on performance (Figure~\ref{pat-ratio}).} As shown in the figures, \algo{}-AO and \algo{}-SO perform much better than Straw-man. The $F_1$-Scores of \algo{}-AO and \algo{}-SO are on average 1.07 and 1.03 times higher than that of Straw-man. 


\bbb{Effect of $b$ on performance (Figure~\ref{pat-base}).} The experiment results show that \algo{}-AO and \algo{}-SO outperform the Straw-man solution. The $F_1$-Scores of \algo{}-AO and \algo{}-SO are on average 1.08 and 1.05 times higher than that of Straw-man. 


\bbb{Analysis.} The results above demonstrate that \algo{} can deal with various patterns of \textit{flow gap}. It has high accuracy and robustness, even in extreme network circumstances.

\end{sloppypar}
    \vspace{-0.1in}

\presec
\section{Conclusion}
\label{sec:conclusion}
\postsec

In this paper, we formalize the task of monitoring the \textit{value} variation of items in the data stream, which is a vacancy in research fields. A scenario corresponding to this task is the real-time detection of the \textit{flow gaps} in the data stream. To solve this problem, we propose \algo, which has two key ideas: \textit{similarity absorption} technique and the $civilian$-$suspect$ mechanism. We conduct extensive experiments on four real-world datasets and design a method of synthetic item loss. Our experiment results show that compared with the Straw-man solution, our algorithm can achieve the same accuracy with 1/32 memory overhead.



    \section{Acknowledgments}
\begin{acks}
%
This work is supported by supported by Beijing Natural Science Foundation (Grant No. QY23043). The authors gratefully acknowledge the foundation's financial assistance.
\end{acks} 

    \bibliographystyle{unsrt}
    \bibliography{InputFiles/reference}	


\end{document}